\pdfoutput=1
\documentclass[reprint,aps,onecolumn,10pt,superscriptaddress,nofootinbib]{revtex4-2}
\usepackage{graphicx} % Required for inserting images
\usepackage{xcolor} % Required for coloured comments

\usepackage[T1]{fontenc}
\usepackage{enumitem}
\usepackage{tikz}
\usepackage{pgfplots}
\pgfplotsset{compat=1.18}
\usepackage{amsfonts}
\usepackage{amsmath,amssymb,amsthm}
\usepackage{graphicx}
\usepackage{fancyhdr}
\usepackage[breakable]{tcolorbox}
\usepackage{bbm}
\usepackage{url}
\usepackage{mathtools}
\usepackage{physics}
\usepackage{upgreek}
\usepackage{dsfont}
\usepackage{float}
\usepackage{bm}
\usepackage{multirow}
\usepackage{makecell}
\usepackage[strings]{underscore}
\usepackage[ruled,vlined]{algorithm2e}

\usepackage{algpseudocode}

\usepackage{hyperref}
\usepackage{cleveref} % for \cref
\usepackage{aliascnt}

\newtheorem{thm}{Theorem}
\newtheorem*{thm*}{Theorem}
\makeatletter
\newcommand{\setthmtag}[1]{% \settheoremtag{<tag>}
  \let\oldthethm\thethm% Store \thetheorem
  \newcommand{\thethm}{#1}% Redefine it to a fixed value
  \g@addto@macro\endthm{% At \end{theorem}, ...
    \addtocounter{thm}{-1}% ...restore theorem counter value and...
    \global\let\thethm\oldthethm}% ...restore \thetheorem
  }
\makeatother

\newaliascnt{proposition}{thm}
\newtheorem{proposition}[proposition]{Proposition}
\aliascntresetthe{proposition}
\newtheorem*{prop*}{Proposition}
\newtheorem*{lemma*}{Lemma}

\newaliascnt{corollary}{thm}
\newtheorem{cor}[corollary]{Corollary}
\aliascntresetthe{corollary}
\newtheorem*{cor*}{Corollary}

\newaliascnt{conjecture}{thm}

\aliascntresetthe{conjecture}

\newtheorem*{cj*}{Conjecture}
\newtheorem*{Def*}{Definition}

\newaliascnt{definition}{thm}

\aliascntresetthe{definition}

\newaliascnt{question}{thm}

\aliascntresetthe{question}

\newaliascnt{problem}{thm}

\aliascntresetthe{problem}

\newaliascnt{lemma}{thm}

\aliascntresetthe{lemma}

\theoremstyle{definition}
\newtheorem{remark}{Remark}
\newtheorem*{rem*}{Remark}

\tikzset{
    vertex/.style={circle, draw=black, fill=blue, inner sep=0pt, minimum size=6pt}
}

\Crefname{figure}{Fig.}{Figs.}
\Crefname{equation}{Eq.}{Eqs.}
\Crefname{section}{Sec.}{Secs.}
\Crefname{appendix}{Appx.}{Appcs.}
\Crefname{algorithm}{Alg.}{Algs.}
\Crefname{table}{Table}{Tables}

\DeclareMathOperator*{\argmin}{arg\,min}

\DeclareMathOperator{\sgn}{sgn}

\def\bin{\{0, 1\}}

\def\calC{\mathcal{C}}
\def\calF{\mathcal{F}}

\SetKw{KwAnd}{and}
\SetKw{KwOr}{or}
\newsavebox{\boxSCSA}
\savebox\boxSCSA{\raisebox{-8ex}{\includegraphics{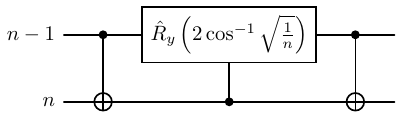}}}
\newsavebox{\boxSCSB}
\savebox\boxSCSB{\raisebox{-8ex}{\includegraphics{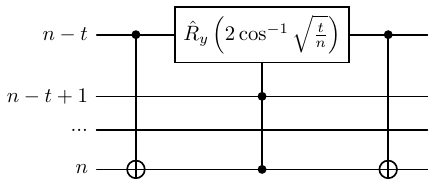}}}

\begin{document}
\pagestyle{plain}
\title{Towards Exponential Quantum Improvements in Solving Cardinality-Constrained Binary Optimization}

\author{\begingroup
    \hypersetup{urlcolor=navyblue}
    \href{https://orcid.org/0009-0001-4941-5448}{Haomu Yuan}
    \endgroup}
\thanks{These authors contributed equally to this work.}
\affiliation{Cavendish Laboratory, Department of Physics, University of Cambridge, Cambridge CB3 0US, UK}
\affiliation{Centre for Quantum Technologies, National University of Singapore, Singapore}
\email[Haomu Yuan ]{hy374@cam.ac.uk}
\author{Hanqing Wu}
\thanks{These authors contributed equally to this work.}
\affiliation{Department of Statistics, Lund University, Lund 220 07, Sweden}
\author{Kuan-Cheng Chen}
\affiliation{Imperial Centre for Quantum Engineering, Science and Technology (QuEST), Imperial College London, London, SW7 2AZ, United Kingdom}
\author{Bin Cheng}
\affiliation{Centre for Quantum Technologies, National University of Singapore, Singapore}
\author{Crispin H. W. Barnes}
\affiliation{Cavendish Laboratory, Department of Physics, University of Cambridge, Cambridge CB3 0US, UK}

\date{\today}
\begin{abstract}
    Cardinality-constrained binary optimization is a fundamental computational primitive with broad applications in machine learning, finance, and scientific computing.
    In this work, we introduce a Grover-based quantum algorithm that exploits the structure of the fixed-cardinality feasible subspace under a natural promise on solution existence.
    For quadratic objectives, our approach achieves $\mathcal{O}\left(\sqrt{\frac{\binom{n}{k}}{{M}}}\right)$ Grover rotations for any fixed cardinality $k$ and degeneracy of the optima $M$, yielding an exponential reduction in the number of Grover iterations compared with unstructured search over $\{0,1\}^n$.
    Building on this result, we develop a hybrid classical--quantum framework based on the alternating direction method of multipliers (ADMM) algorithm.
    The proposed framework is guaranteed to output an $\epsilon$-approximate solution with a consistency tolerance $\epsilon + \delta$ using at most $
\mathcal{O}\left(\sqrt{\binom{n}{k}}\frac{n^{6}k^{3/2} }{ \sqrt{M}\epsilon^2 \delta }\right)$
 queries to a quadratic oracle, together with $\mathcal{O}\left(\frac{n^{6}k^{3/2}}{\epsilon^2\delta}\right)$ classical overhead.
    Overall, our method suggests a practical use of quantum resources and demonstrates an exponential improvements over existing Grover-based approaches in certain parameter regimes, thereby paving the way toward quantum advantage in constrained binary optimization.
\end{abstract}
\maketitle

\section{Introduction}\label{sec:problem_statement}
The binary polynomial programming with fixed-cardinality (BPP-FC) has a wide range of applications in portfolio optimization \cite{markowitzPortfolioSelection1952,Maillard2010,Ricca2024}, biology~\cite{Fratkin2006,Saha2010}, graph and network analysis~\cite{10.1145/792550.792553,10.5555/1083592.1083676, Brandes2005}, and sparse linear regression \cite{bertsimasBestSubsetSelection2016, DedieuLearning2021}. 
For some particular cases of BPP-FC, efficient approximation algorithms exist. One example is the famous dense $k$-subgraph problem, for which Feige \textit{et al}. showed that a greedy algorithm can achieve an approximation ratio $\mathcal{O}(n^\delta)$ (with $\delta < 1/3$) in polynomial time~\cite{Feige2001}. They also note an approximation method to obtain a ratio of $n/k$ using semidefinite programming as subroutines, based on the private communications with Goemans. However, BPP-FC is in general NP-hard to solve. As quantum computing thrives, many quantum algorithms have demonstrated successful speed-ups in quadratic unconstrained binary optimizations (QUBO), such as quantum approximate optimization algorithm (QAOA)~\cite{Farhi2014}, variational quantum eigensolver (VQE)~\cite{Cerezo2021}, quantum annealing~\cite{Kadowaki1998},  Grover adaptive search (GAS)~\cite{Durr1999}, and Hamiltonian Updates~\cite{Yuan2025}. Due to the growing need to demonstrate quantum advantage in practical applications, variants of quantum QUBO solvers for fixed-cardinality problems start to draw more attention in recent years. Those methods can be categorized into two main regimes: noisy intermediate-scale quantum (NISQ) and fault-tolerant quantum computing (FTQC). The NISQ algorithms, including QAOA and VQE, are typically constructed by designing parameterized operators that preserve the structure of the constraint subspace~\cite{Hadfield2019,yuan2024quantifyingadvantagesapplyingquantum}. Meanwhile, in the FTQC regime, Gilliam \textit{et al}. propose a GAS solver for quadratic programming that incorporates a soft constraint into the objective function~\cite{Gilliam2021groveradaptive}. However, rigorous theoretical bounds for speed-ups remain scarce.

In this article, we introduce a new GAS framework for BPP-FC and rigorously quantify its quantum speed-up. Key to this approach is a new design for the diffusion operator in the Grover search subroutine, which confines the search dynamics to the fixed-cardinality subspace for the quadratic programming problem. 
As an application, we extend our method to a specific form of BPP-FC via state-of-the-art alternating direction method of multipliers (ADMM) algorithm to hybridize the classical optimization and our quantum solver. The entire pipeline admits a traceable complexity analysis, confirming the efficient scaling of the required computational resources.
Specifically, in the first part of this article, we consider a general \textit{binary quadratic programming with the fixed-cardinality constraint} (BQP-FC) defined by
\begin{equation}\label{eq:ksubgrap}
    \begin{aligned}
        \argmin_{\mathbf x \in\{0,1\}^n} & \quad f({\mathbf x})= \frac{1}{2}{\mathbf x}^\intercal \boldsymbol\Sigma \mathbf x-{\boldsymbol \mu}^\intercal \mathbf x \\
        \mathrm{s.t.}                             & \quad \mathbf x^\intercal  \mathbf{1} = k,
    \end{aligned}
\end{equation}
where $k \in\{0,\dots,n\}$ is a fixed constant, and $\boldsymbol \mu\in\mathbb{R}^n$ and $\boldsymbol\Sigma\in\mathbb{R}^{n\times n}$ denote the linear and quadratic coefficients of the objective function, respectively.
Note that this optimization admits a direct conversion to a weighted hypergraph {problem} satisfying $K$-locality, where $K$ is the maximum degree of the polynomials. 
Furthermore, due to the fixed-cardinality constraint, \Cref{eq:ksubgrap} is also considered as a generalization of a group of graph optimization problems, also known as the densest $k$-subset problem~\cite{Feige2001, Sotirov2020}, the heaviest unweighted subgraph problem~\cite{Kortsarz1993}, the $k$-cluster problem \cite{CORNEIL198427}, or the $k$-cardinality subgraph problem\cite{Bruglieri2006}.
The problem \Cref{eq:ksubgrap} is, in general, NP-hard for graphs with maximum degree three~\cite{feige1997densest}, and can be solved in polynomial time for graphs with maximum degree two~\cite{CORNEIL198427}.

Additionally, we demonstrate the versatility of our framework by applying it to a challenging quartic optimization problem: the \textit{risk parity model}. This model has exhibited remarkable stability in hedging relative risks in portfolio management~\cite{Maillard2010} and is formulated as:
\begin{equation}
    \label{eq:risk_parity_1}
    \begin{aligned}
        \argmin_{\mathbf x \in \{0,1\}^n} & \quad g({\mathbf x})= \sum_{\substack{1 \leq i, j \leq n \\ i \neq j}} \left(x_i  (\boldsymbol{\Sigma}\mathbf x)_i - x_j (\boldsymbol{\Sigma}\mathbf x)_j\right)^2 + \lambda \left(-\mathbf x^\intercal  \boldsymbol{\mu} + \frac{1}{2} \mathbf x^\intercal  \boldsymbol{\Sigma} \mathbf x\right) \\
        \mathrm{s.t.}                              & \quad \mathbf x^\intercal  \mathbf{1} = k, 
    \end{aligned}
\end{equation}
where $\boldsymbol{\mu}$ is the vector of expected returns, $\boldsymbol{\Sigma}$ is the covariance matrix, and $\lambda$ is a trade-off hyperparameter. The first term in \Cref{eq:risk_parity_1} minimizes the disparity in risk contributions between assets, aiming for an equal-risk allocation. This approach was popularized by Bridgewater Associates' ``All Weather'' strategy, which has delivered robust performance through its inherent stability and diversification~\cite{Qian2005,Asness2012}. However, risk parity problem is quartic optimization, which is in general NP-hard~\cite{Feige2001}, creating significant computational challenges for large-scale portfolios.

The rest of this article is structured as follows. In \Cref{sec:gasinto}, we review Grover adaptive search for solving QUBOs and its Grover search subroutines.
In \Cref{sec:constrained_grover}, we present a novel Grover adaptive search algorithm for solving the BQP-FC.
In \Cref{sec:admm}, we detail the hybrid classical-quantum ADMM framework for solving the risk parity models. In \Cref{sec:resource}, we provide the complete resources analysis of our methods.
Finally, we conclude the paper and discuss future directions in \Cref{sec:conclusion}.

\section{Grover Search and Grover Adaptive Search}\label{sec:gasinto}
In this section, we briefly review the Grover search algorithm~\cite{Grover1996} and the Grover adaptive search (GAS) algorithm~\cite{Durr1999, Bulger2003, Baritompa2005}, which will be used as the quantum subroutine for solving the fixed-cardinality  binary optimization problems in this article.

\paragraph{Grover search.}
Consider the unstructured search problem with a single target.
Let $\vb*{\omega} \in \bin^n$ be the target bitstring and let $f: \bin^n \to \bin$ such that $f(\vb{x}) = 1$ if and only if $\vb{x} = \vb*{\omega}$.
The standard Grover search~\cite{Grover1996} starts with $\ket{s} ={\frac {1}{\sqrt {N}}}\sum _{\vb{x} \in \bin^n} \ket{\vb{x}}$, the uniform superposition over computational basis states.
The probability of observing the target on $\ket{s}$ is $1/N$.
Such a probability is amplified by iteratively alternating between the oracle and diffusion operators.  
Explicitly, the oracle operator is defined as
\begin{equation}
    \hat{O}\ket{\mathbf x} =(-1)^{f(\mathbf x)}\ket{\mathbf x} ,
\end{equation}
and the diffusion operator is defined as
\begin{equation}
    \hat A=2\dyad{s}-\hat I =  \hat H^{\otimes n }(2 \dyad{0^n} -\hat I)  \hat H^{\otimes n } \ ,
\end{equation}
where $\hat H$ is the Hadamard gate, and $I$ is the identity operator. 
The standard Grover algorithm can be written as $r$ repetitions of $\hat A\hat O$:
\begin{equation}
    \ket{\psi_r} = (\hat A\hat O)^{ r } \ket{s},
\end{equation}
where the probability of measuring the $|\boldsymbol \omega\rangle$ on $\ket{\psi_r}$ is $\sin^2((r+\frac{1}{2})\theta)$ and $\theta =2\arcsin {\frac{1}{\sqrt {N}}}$. Assuming $N$ is a large number, $\theta/2$ lies in the interval $[0,\pi/2]$ and is considered negligible when compared with $r \theta$. Thus, when $r\theta \approx \frac{\pi}{2}$, we have $r = \frac{\pi}{2\theta}=\frac{\pi}{4\arcsin(1/\sqrt {N})}\approx \frac{\pi\sqrt {N}}{4}$, and the probability of measuring the $|\boldsymbol \omega\rangle$ is close to 1. 
In the general case with $M$ target solutions, the quantum counting algorithm estimates $M$ with complexity $\mathcal{O}(\sqrt{N/M})$~\cite{Brassard1998}.
\begin{remark}
    We refer to one application of $\hat A \hat O$ as a Grover rotation, since $\hat A\hat O$ acts as a rotation in the two-dimensional subspace spanned by $\ket{\vb*{\omega}}$ and $\ket{\vb*{\omega}^{\perp}} := \frac{1}{\sqrt{N-1}} \sum_{\vb{x} \neq \vb*{\omega}} \ket{\vb{x}}$.
\end{remark}

\paragraph{Grover adaptive search.}
The Grover adaptive search algorithm~\cite{Durr1999, Bulger2003, Baritompa2005} can be used to find the minimum of a QUBO by adaptively updating the oracle used in Grover search.
For instance, given a bounded QUBO objective $f:\{0,1\}^n \to \mathbb{R}$ with $f(\mathbf x)<y'$, we can define an oracle operator $\hat O_y$ as
\begin{equation}
    \begin{aligned}\label{eq:gas_oracle}
        \hat O_y\ket{\mathbf x} ={\sgn(f(\vb{x})-y)} \ket{\vb{x}}
    \end{aligned}
\end{equation}
where $y\leq y'$.
To implement this, we can first encode the two's-complement binary representation of $(f(\mathbf x)-y)\in[-2^{m-1},2^{m-1})$ into $m$ ancilla qubits. As the last bit encodes the sign of the value, the sign function can be implemented by applying a Pauli-$Z$ to that sign bit (controlled on any additional work qubits required by the encoding). Efficient constructions for performing this arithmetic encoding are given in~\cite{Gilliam2021groveradaptive, Gilliam2021} using a structured sequence of rotations and inverse-QFT techniques---see~\Cref{sec:quantum_dictionary_oracle} and~\Cref{tb:oracletable} for details. Additionally, quantum addition circuits and quantum signal processing techniques can be used to construct the sign function oracle~\cite{Cuccaro2004,10821069}.

GAS then iteratively decreases the threshold $y$ based on the outcomes of Grover search subroutines, and eventually converges to the global minimum (which may be degenerate). In this article, we consider the random GAS proposed in~\cite{Durr1999}---see \Cref{alg:random_gas} for details.
For $\xi=1.34$, the expected number of iterations is bounded by $1.32(N/t)^{1/2}$, and the expected number of oracle queries is bounded by $1.32 \sqrt{N} \sum_{r=t+1}^N \frac{1}{r \sqrt{r-1}} \approx 2.46(N/t)^{1/2}$~\cite{Baritompa2005}, where $t$ is the degeneracy of global minimal solutions.

\begin{algorithm}[H]
    \SetAlgoLined
    \textbf{Input: $f: \bin^n \rightarrow \mathbb{R}$, $\xi> 1$, $r_{\max}=1,i =1$ }\\
    \KwResult{$y$}
    Uniformly select $\vb{u}_1\in \bin^n$, set $y_1 = f(\mathbf u_1)$, $\mathbf x= \mathbf u_1$, and $y=y_1$;\\
    \Repeat{a termination condition is met}{
        Uniformly select a rotation count $r_i$ from $\{0,\dots,\lceil r_{\max} -1\rceil\}$;\\
        Perform a Grover search of $r_i$ rotations with oracles $\hat O_{y_i}$, and output $\mathbf x$ and $y$;\\
        \eIf{$y < y_i$}{
            Set $\mathbf u_{i+1} = \mathbf x$, $y_{i+1} = y$, and $r_{\max}=1$;\\
        }{
            Set $\mathbf u_{i+1} = \mathbf u_i$, $y_{i+1} = y_i$, and $r_{\max} = \xi r_{\max}$ \footnotemark;
        }
        $i = i+1$;
    }
    \caption{Grover Adaptive Search (GAS)}\label{alg:random_gas}
\end{algorithm}
\footnotetext{Note that the optimal rotation count for single solution is $\lceil\pi\sqrt{N}/4\rceil$, there will be no need to increase $k$ further. In the framework of our Grover search for hard-constrained quadratic programming, we will provide a more general bound defined by the problem's constraints---see \Cref{sec:constrained_grover} for details.}

\section{Grover adaptive search for binary quadratic programming with fixed-cardinality}
\label{sec:constrained_grover}
In this section, we introduce the Grover search algorithm for hard-constrained programming and use the Grover adaptive search framework to find the minimum of \Cref{eq:ksubgrap}. Compared with the Grover search algorithm for the soft-constrained programming, we show that our method reduces the iterations exponentially. Additionally, we analyze the gate counts and circuit depth of our algorithm.

In~\cite{Gilliam2021groveradaptive}, Gilliam \textit{et al.} proposed a Grover algorithm to solve a soft-constrained~\Cref{eq:ksubgrap} quadratic binary programming by adding a penalty to the objective function, i.e., the Grover algorithm for soft-constrained problems will construct the oracle on a new objective function with the constraint penalty.
Specifically, one will apply Grover algorithm to the following optimization problem
\begin{equation}\label{eq:soft_constrained_grover}
    {\arg\min}_{\mathbf x \in\{0,1\}^n} {\mathbf x}^\intercal \mathbf \Sigma {\mathbf x}-{\boldsymbol \mu}^\intercal {\mathbf x} +\lambda (\sum_i x_i  - k)^2,
\end{equation}
where $\lambda$ is a large number to enforce the satisfaction of the constraint over Grover search. However,  the penalty variable $\lambda$ can only implement a loose bound of the constraint as an overly large value may lead to ill-conditioning of the original objective, and a small value may lead to constraint violation.
Under this assumption, the worst-case time complexity remains bounded by $\mathcal{O}(\sqrt{N/M})$ time. If we consider all combinations of selecting $k$ from $n$ as binomial theorem, there exists a classical constrained brute-force algorithm with a time complexity of $\binom{n}{k}$ for brute-forcing all feasible solutions to the quadratic binary optimization problem with the fixed-cardinality---see Appendix B in \cite{yuan2024quantifyingadvantagesapplyingquantum} for details. Thus, it is unclear whether the soft-constrained Grover algorithm is more efficient than a classical brute-force algorithm. 
In this work, we construct a novel diffusion operator for the hard-constraint problem, extending the framework introduced by \cite{yuan2026quantum}. We note that independent, concurrent work by \cite{bae2026reducingcircuitresourcesgrovers} similarly utilizes Dicke state preparation to enforce fixed-cardinality constraints; however, their approach is developed within the context of standard Grover's search rather than the Grover Adaptive Search (GAS) regime explored here.

In~\Cref{thm:hard_constrained_grover}, we define a Grover search algorithm for hard-constrained problem that guarantees a non-trivial speed-up over all other algorithms.
\begin{thm}[Grover search for hard-constrained programming]\label{thm:hard_constrained_grover}
    Consider a binary quadratic programming subject to a fixed-cardinality constraint with a feasible set $\mathcal{C}$. Let $\mathcal{F}\subseteq \mathcal{C}$ denote the set of solutions with strictly better objective values than those in $\mathcal{C}\backslash \mathcal{F}$ and $|\mathcal{F}|=M$. 
    Assume that there exists an efficient quantum circuit $\hat{C}$ to prepare $\ket{s_c} = \frac{1}{\sqrt{|\calC|}} \sum_{\mathbf x \in \calC} \ket{\mathbf x}$.
    Then, the Grover search for the hard-constrained problem generates a uniform distribution over $\mathcal{F}$, where each solution is measured with probability $\frac{1}{M}$ after approximately $\frac{\pi}{4}\sqrt{\frac{|\mathcal{C}|}{|\mathcal{F}|}}$ iterations.
    In particular, when $\calC$ is defined by the fixed-cardinality $k$, the number of iterations is $\frac{\pi}{4} \sqrt{\binom{n}{k}/M}$.
\end{thm}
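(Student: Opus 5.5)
The argument reduces to the standard two-dimensional analysis of Grover search, with $\ket{s}$ replaced by $\ket{s_c}$. The constrained diffusion operator is
\[
\hat A_c = 2\dyad{s_c} - \hat I = \hat C\,(2\dyad{0^n}-\hat I)\,\hat C^\dagger .
\]
The oracle $\hat O_y$ of \Cref{eq:gas_oracle} is taken with a threshold $y$ that separates $\calF$ from $\calC\setminus\calF$. This is possible because, by hypothesis, every element of $\calF$ has strictly smaller objective value than every element of $\calC\setminus\calF$. On feasible basis states, $\hat O_y$ therefore acts as $-1$ on $\calF$ and $+1$ on $\calC\setminus\calF$. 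What it does on infeasible strings does not matter, as the next step shows.

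Define the normalized states
\[
\ket{\alpha} = \frac{1}{\sqrt{M}}\sum_{\mathbf x\in\calF}\ket{\mathbf x}, \qquad \ket{\beta} = \frac{1}{\sqrt{|\calC|-M}}\sum_{\mathbf x\in\calC\setminus\calF}\ket{\mathbf x}.
\]
\begin{enumerate}
\item \textbf{Invariant plane.} We have $\ket{s_c}=\sin(\theta/2)\ket{\alpha}+\cos(\theta/2)\ket{\beta}$ with $\sin(\theta/2)=\sqrt{M/|\calC|}$. The plane $\mathrm{span}\{\ket{\alpha},\ket{\beta}\}$ lies inside $\mathrm{span}\{\ket{\mathbf x}:\mathbf x\in\calC\}$ and is invariant under both operators:
\begin{itemize}
\item $\hat O_y$ restricted to the plane is the reflection $\hat I-2\dyad{\alpha}$;
\item $\hat A_c$ maps the plane to itself, since $\ket{s_c}$ lies in it.
\end{itemize}
Hence the state never leaves the feasible subspace, which is why infeasible strings are irrelevant.
\item \textbf{Rotation.} The product $\hat A_c\hat O_y$ is a composition of two reflections, so it is a rotation by angle $\theta$ in this plane. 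After $r$ iterations,
\[
(\hat A_c\hat O_y)^r\ket{s_c}=\sin\bigl((r+\tfrac12)\theta\bigr)\ket{\alpha}+\cos\bigl((r+\tfrac12)\theta\bigr)\ket{\beta}.
\]
\item \textbf{Uniformity over $\calF$.} The amplitude on $\calF$ is always proportional to $\ket{\alpha}$, which is uniform over $\calF$. Measuring and conditioning on success therefore gives each $\mathbf x\in\calF$ with probability exactly $1/M$.
\item \textbf{Iteration count.} Choose $r$ with $(r+\tfrac12)\theta\approx\pi/2$. Since $\theta\approx 2\sqrt{M/|\calC|}$ when $M\ll|\calC|$, this gives $r\approx\frac{\pi}{4}\sqrt{|\calC|/M}$, and the success probability is close to $1$. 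The unconditional probability of each solution is then $\approx 1/M$.
\item \textbf{Fixed cardinality.} Here $\calC=\{\mathbf x:\mathbf x^\intercal\mathbf 1=k\}$, so $|\calC|=\binom{n}{k}$ and $\ket{s_c}$ is the Dicke state $\ket{D^n_k}$. This state can be prepared efficiently with known deterministic circuits of polynomial size, which justifies the assumption on $\hat C$. Substituting into the bound gives $\frac{\pi}{4}\sqrt{\binom nk/M}$ iterations.
\end{enumerate}

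The main subtlety I expect is not the rotation algebra. It is checking that the plane is invariant even though $\hat O_y$ is defined on all of $\{0,1\}^n$, and making the approximations explicit: the "$\approx$" in $r$ comes from $\arcsin$ and rounding. I would state an exact success probability $\sin^2\bigl((2r+1)\arcsin\sqrt{M/|\calC|}\bigr)\ge 1-M/|\calC|$ for $r=\lfloor \pi/(4\arcsin\sqrt{M/|\calC|})\rfloor$.
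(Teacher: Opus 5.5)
Your proposal is correct and follows essentially the same route as the paper: both restrict to the plane spanned by the uniform superpositions over $\mathcal{F}$ and $\mathcal{C}\setminus\mathcal{F}$, with $\sin a=\sqrt{M/|\mathcal{C}|}$ (your $\theta/2$), and show that the amplitude on $\mathcal{F}$ after $r$ iterations is $\sin((2r+1)a)$, which gives $r\approx\frac{\pi}{4}\sqrt{|\mathcal{C}|/M}$. The paper checks the rotation by explicit trigonometric computation and induction rather than by composing two reflections. Your extra points make rigorous what the paper leaves implicit: the strictly separating threshold $y$, the invariance of the feasible plane (so infeasible strings are irrelevant), and the exact bound $\sin^2((2r+1)a)\ge 1-M/|\mathcal{C}|$ for $r=\lfloor \pi/(4a)\rfloor$.
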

\begin{proof}
     The Grover diffusion operator for hard-constrained problems is defined as:
    \begin{equation}
        \hat A_{c} := 2 \dyad{s_c} -\hat{I} = \hat{C} (2 \dyad{0^n} -\hat{I}) \hat{C}^{\dagger}.
    \end{equation}
    Similarly to the original Grover algorithm, the initial state can be written as:
    \begin{equation}
        \ket{s_c} = \cos(a) |\bar{0}\rangle + \sin(a) \ket{\bar{1}},
    \end{equation}
    where
    \begin{equation}\label{eq:constrained_angles}
        \begin{aligned}
            \sin(a)         & = \sqrt{\frac{|\mathcal{F}|}{|\mathcal{C}|}}, & \cos(a)         & = \sqrt{1-\frac{|\mathcal{F}|}{|\mathcal{C}|}}                \\
            \ket{\bar{1}} & = \frac{1}{\sqrt{|\calF|}} \sum_{\mathbf x\in \mathcal{F}} \ket{\mathbf {x}},         & |\bar{0}\rangle & = \frac{1}{\sqrt{|\calC \setminus \calF|}} \sum_{\mathbf x \in \mathcal{C} \setminus   \mathcal{F}} \ket{\mathbf x} .
        \end{aligned}
    \end{equation}
    Thus, when implementing the oracle operators from the GAS~\Cref{eq:gas_oracle}, we will have
    \begin{equation}
        \hat O \ket{s_c} = \cos(a) |\bar{0}\rangle - \sin(a) \ket{\bar{1}}.
    \end{equation}
    Then, by the results of applying $A_c$ on each basis state, i.e.
    \begin{equation}
        \begin{aligned}
            \hat A_c|\bar{0}\rangle & = (2\cos(a)^2-1) |\bar{0}\rangle + 2\sin(a)\cos(a) \ket{\bar{1}} \\
                               & = \cos(2a) |\bar{0}\rangle + \sin(2a) \ket{\bar{1}},             \\
            \hat A_c\ket{\bar{1}} & = (2\sin(a)^2-1) \ket{\bar{1}} + 2\sin(a)\cos(a) |\bar{0}\rangle \\
                               & = -\cos(2a) \ket{\bar{1}} + \sin(2a) |\bar{0}\rangle.
        \end{aligned}
    \end{equation}
    Thus, after one iteration, we have
    \begin{equation}
        \begin{aligned}
            \hat A_c \hat O \ket{s_c} & = [\cos(a)\cos(2a)-\sin(a)\sin(2a)] \ket{\bar{0}} + [\sin(a)\cos(2a)+\cos(a)\sin(2a)] \ket{\bar{1}} \\
                           & = \cos(3a) \ket{\bar{0}} + \sin(3a) \ket{\bar{1}}.
        \end{aligned}
    \end{equation}
    By induction, applying the Grover iterate $\hat A_c \hat O$ for $r$ times, the amplitude of the $\ket{\bar{1}}$ becomes $\sin(a+2ar)$. Thus, we can set $(2r+1)a = \pi/2$ to maximize the amplitude of the $\ket{\bar{1}}$ state to $1$, together with~\Cref{eq:constrained_angles}, it means
    \begin{equation}
        r = \frac{\pi}{4\arcsin(\sqrt{\frac{|\mathcal{F}|}{|\mathcal{C}|}})} \approx \frac{\pi}{4}\sqrt{\frac{|\mathcal{C}|}{|\mathcal{F}|}},
    \end{equation}
    and we derive the conclusion.
\end{proof}
In the following subsections, we will continue to give the design and analysis of diffusion operators and oracles complying with the fixed-cardinality constraint within the framework of~\Cref{thm:hard_constrained_grover}.

\subsection{Diffusion operator}

As shown in~\Cref{thm:hard_constrained_grover}, the core idea of the  Grover search for hard-constrained problems is to construct a diffusion operator that restricts the state space to feasible solutions.
In this subsection, we give an explicit construction of the quantum circuit $\hat{C}$ for generating the following state when $\calC = \left\{ \vb{x} \in \bin^n : |\vb{x}| = k \right\}$ is the cardinality-constrained subset
\begin{equation}\label{eq:constrained_initial_state}
    \ket{s_c} = \hat{C} \ket{0^n} = \frac{1}{\sqrt{|\calC|}} \sum_{\mathbf x \in \calC} \ket{\mathbf x} \ ,
\end{equation}
which will give the construction of the hard-constrained Grover diffusion operator.
We will utilize the quantum circuit for preparing the Dicke state to prepare the state $\ket{s_c}$. 
To distinguish from other constraints, we write $|h_k\rangle$ specifically for the state representing the fixed-cardinality constraints, where $k$ represents the sum value.

A Dicke state preparation is a map, starting from one of the basis states of Hamming weight $|\vb{x}|=k$ to the uniform superposition of all the states with Hamming weight $|\vb{x}|=k$:
\begin{equation}\label{eq:diffusion_Unk}
    \hat U^n_k: \ket{0^{n-k}} \ket{1^k} \rightarrow \ket{h_k}:= \binom{n}{k}^{-\frac{1}{2}} \sum_{\substack{
            \mathbf x \in \{0,1\}^n \\
            |\vb{x}| = k
        }}\ket{\mathbf x} \ .
\end{equation}
A deterministic way to generate such a unitary of this map in the quantum computer can be achieved by an inductive construction~\cite{Baertschi2019}, and is given by:
\begin{equation}\label{eq:unitary_scs}
    \hat U^n_k = \prod_{\ell=2}^k\left(\widehat{S C S}_{\ell, \ell-1} \otimes \mathrm{Id}^{\otimes n-\ell}\right) \cdot \prod_{\ell=k+1}^n\left(\mathrm{Id}^{\otimes \ell-k-1} \otimes \widehat{S C S}_{\ell, k} \otimes \mathrm{Id}^{\otimes n-\ell}\right)
\end{equation}
where $\widehat{SCS}_{i,j}$ is called \textit{Split \& Cyclic Shift} unitary defined as follows.
For all $\ell \in 1, \dots, j$, where $j \leq i-1$
\begin{align}
    \widehat{S C S}_{i, j}\ket{0^{j+1}}                              & =\ket{0^{j+1}}                                                                                                                                      \\
    \widehat{S C S}_{i, j}\ket{0^{j+1-\ell}} \ket{1^{\ell}} & =\sqrt{\frac{\ell}{i}} \ket{0^{j+1-\ell}} \ket{1^{\ell}}+\sqrt{\frac{i-\ell}{i}}\ket{0^{j-\ell}} \ket{1^{\ell}} \ket{0} \\
    \widehat{S C S}_{i, j}\ket{1^{j+1}}                              & =\ket{1^{j+1}}.
\end{align}

Thus, the diffusion operator for the fixed-cardinality constraint is defined as
\begin{equation}
    \hat A_n^k = \hat U^n_k (2 \dyad{0^{n-k} 1^k} -\hat I) \hat U^{n\dagger}_k = 2\ket{h_k}\bra{h_k}-\hat I.
\end{equation}
For the operator $2 \dyad{0^{n-k} 1^k} -\hat{I}$, it can be constructed as follows.
First, note that $2 \dyad{1^n} - \hat{I}$ is the multi-controlled $Z$ gates $\widehat{CZ}_n$ up to a minus sign.
Then, conjugating it by $\hat{X}^{\otimes (n-k)} \hat{I}^{\otimes k}$ gives the quantum circuit for $2 \dyad{0^{n-k} 1^k} -\hat{I}$.
Explicitly, we have 
\begin{equation}\label{eq:diffusion_operator_end}
    \hat A_n^k = \hat{U}^n_k (\hat{X}^{\otimes (n-k)} \hat{I}^{\otimes k}) \widehat{C Z}_n (\hat{X}^{\otimes (n-k)} \hat{I}^{\otimes k}) \hat{U}^{n\dagger}_k \ .
\end{equation}

The $\widehat{SCS}_{l,k}$ will involve the swapping of the qubit excitations on three qubits: the $l$-th qubit, the $(l-k)$-th qubit, and the $(l-k+1)$-th qubit. Thus, the construction of the quantum circuit $\widehat{SCS}_{l,k}$ can be implemented with one two-qubit gate and $k-1$ three-qubit gates---see~\Cref{sec:scs} for details.
From the product structure of $\hat{U}_k^n$ in~\Cref{eq:unitary_scs}, its depth is $\mathcal{O}(n)$, using $\mathcal{O}(kn)$ gates. We also note a parallelization optimization of the circuit blocks inside $\hat U^{n}_k$ reduces the circuit depth to $\mathcal{O}{(k \log \frac{n}{k})}$ if using all-to-all connected circuit~\cite{9951196}.

\subsection{Oracle}
In this article, we use the quantum dictionary(QD) oracle in our BQP-FC quantum solver~\cite{Gilliam2021}---see~\Cref{sec:quantum_dictionary_oracle} for design details and gate usages summarized in~\Cref{tb:oracletable}. Given a quadratic function and as in \Cref{eq:soft_constrained_grover} and $m$-bit precision, the QD oracles requires $mn$ 1-control rotations, and $mn(n-1)/2$ the 2-control rotations to encode the $f(\mathbf x)-y$ to the phase of the prepared state. Additionally, we also need to implement the inverse QFT on the ancillia qubits to convert the phase encoding to the basis encoding, which requires $\mathcal{O}(m\log m)$ Toffoli gates as suggested by~\cite{Nam2020}.

\newcommand{\rowpad}{\rule{0pt}{5.5ex}} 

\begin{table}[!ht]
    \footnotesize
    \centering
    \begin{tabular}{c c c c}
        \hline
        Object & Gate                              & QD-GAS\cite{Gilliam2021groveradaptive} & ADMM-GAS-hard   \\
        \hline
        \multirow{2.3}{*}{\rowpad\makecell{Quadratic                                                          \\~\Cref{eq:ksubgrap}}}
               & \rowpad ${\widehat{C^1R}}(\theta)$            & $mn$                                   & $mn$            \\
               & \rowpad $\widehat{C^2R}(\theta)$            & $m\binom{n}{2}$                        & $m\binom{n}{2}$ \\
        \noalign{\vskip 3pt}\hline
        \multirow{7}{*}{\rowpad\makecell{Quartic                                                              \\~\Cref{eq:risk_parity_1}}}
               & \rowpad $\widehat{C^1R}(\theta)$            & $mn$                                   & $mn$            \\
               & \rowpad $\widehat{C^2R}(\theta)$            & $m\binom{n}{2}$                        & $m\binom{n}{2}$ \\
               & \rowpad $\widehat{C^3R}(\theta)$            & $m\binom{n}{3}$                        & -               \\
               & \rowpad $\widehat{C^4R}(\theta)$            & $m\binom{n}{4}$                        & -               \\
        \noalign{\vskip 3pt}\hline
        \multirow{3}{*}{\rowpad\makecell{$K$-degree                                                           \\non-sparse\\ polynomial\\$K>4$}}
               & \rowpad \makecell{$\widehat{C^1R}(\theta)$} & $mn$                                   & $mn$            \\
               & \rowpad \makecell{$\widehat{C^2R}(\theta)$} & $m\binom{n}{2}$                        & $m\binom{n}{2}$ \\
               & \rowpad \makecell{$\widehat{C^kR}(\theta)$                                                             \\$ k = 3,\dots K$} & $m\binom{n}{k}$& -  \\
        \hline
    \end{tabular}
    \caption[Gate counts per oracle in each Grover search of GAS applied to quadratic, quartic, and $K$-order binary optimization problems.]{Gate counts per oracle in each Grover search of GAS applied to quadratic, quartic, and high-degree binary programmings with fixed-cardinality constraint in addition to $\mathcal{O}(m\log m)$ Toffoli gates for inverse QFT. QD-GAS is short for the quantum dictionary oracle based GAS, and ADMM-GAS-hard is the ADMM-based hybrid GAS approach for hard-constrained programmings. }
    \label{tb:oracletable}
\end{table}

However, for higher-degree polynomial functions, i.e. the quartic objective function in \Cref{eq:risk_parity_1}, the controlled rotation $\widehat{CU}_{\mathbf x}\left(2\pi/{2^m}\right)$ in~\Cref{eq:qd_cux} will require $m\binom{n}{4}$ 4-controlled rotations, $m\binom{n}{3}$ 3-controlled rotations, $m\binom{n}{2}$ 2-controlled rotations and $mn$ 1-controlled rotations. In fact, the increase of polynomial degree will lead to an exponential growth of quantum gates and circuit depth. Thus, it is inefficient to use QD oracle for high-degree polynomial programming directly.
Therefore, a method that can reduce the uses of the quantum gates for high-degree polynomial binary optimization is an necessity for expanding the GAS to broader applications. In the next section, we will introduce an ADMM-based method to decompose the risk parity model's optimization into several quadratic sub-optimizations problems, which can be efficiently solved by our GAS algorithm and classical solvers. Without loss of generality, the proposed ADMM-based method also suits to other BPP-FC instances to reduce quantum resources and circuit depth.

\section{ADMM for the risk-parity problem}\label{sec:admm}
In this section, we introduce the ADMM algorithm for solving the risk parity model in \Cref{eq:risk_parity_1}.
Recent work \cite{gambella2020multiblock} has shown the potential of integrating quantum computing algorithms into the alternating direction method of multipliers (ADMM) \cite{boyd2011distributed} scheme for solving mixed-binary programming problems. Due to the difficulty that current quantum circuits have in solving quartic optimizations, we instead propose the following objective function, which converts the quartic optimization into a bi-quadratic one. The new objective function aligns seamlessly with the scheme of ADMM and has a theoretical convergence guarantee as established in \cite{wang2019global}.

Prior to presenting the ADMM algorithm, we impose the following assumptions regarding the mean vector $\boldsymbol{\mu}$ and covariance matrix $\boldsymbol{\Sigma}$:
\begin{enumerate}
    \item[] \hypertarget{upper_bound}{({\bf Data Boundedness})} There exist constants $c_1, c_2 > 0$ such that $\|\boldsymbol{\mu}\|_{\infty} \le c_1$ and $\Sigma_{ii} \le c_2$ for all $i = 1, \dots, n$.
    \item[] \hypertarget{lower_bound}{({\bf Non-Degeneracy})} There exists a constant $c_3 > 0$ such that $\gamma_{\min}(\boldsymbol{\Sigma}) \ge c_3$.
\end{enumerate}

In the financial setting, both assumptions are mild and reasonable. The assumption of data boundedness reflects the expectation that asset returns and variances are finite and bounded. In particular, this assumption also implies $\gamma_{\max}(\mathbf \Sigma) \le c_2n$ since $\gamma_{\max} \leq \tr(\vb*{\Sigma})$ for a positive semidefinite $\mathbf \Sigma$. Meanwhile, the non-degeneracy assumption is readily satisfied by employing the Ledoit-Wolf shrinkage technique \cite{ledoitHoney2023, ledoitWellconditionedEstimatorLargedimensional2004}, which guarantees a well-conditioned estimate for the sample covariance matrix.

Under the above two assumptions, we can apply ADMM to an equivalent reformulation of the original optimization problem in \Cref{eq:risk_parity_1}:
\begin{equation}\label{eq:quadratic_form}
    \begin{aligned}
        \min_{\mathbf x_1, \mathbf x_2, \mathbf y \in \mathbb R^{n}} & \quad \phi(\mathbf x_1, \mathbf x_2, \mathbf y) :=  g(\mathbf x_1, \mathbf x_2) +  \iota_{\mathcal S}(\mathbf x_1) + f(\mathbf x_2)  + h(\mathbf y) \\
        s.t.                                                         & \quad \mathbf x_1 - \mathbf x_2 - \mathbf y = \mathbf 0,
    \end{aligned}
\end{equation}
where
\begin{equation*}
    \begin{aligned}
        g(\mathbf x_1, \mathbf x_2)
                       & = \sum_{\substack{1 \le i , j \le n \\ i \neq j}} \left(x_{1, i} (\boldsymbol{\Sigma} \mathbf x_2)_i - x_{1, j} (\boldsymbol{\Sigma} \mathbf x_2)_j\right)^2, \\
        f(\mathbf x_2) & =  \lambda\left(-\mathbf x_2^\intercal  \boldsymbol{\mu} + \frac{1}{2}\mathbf x_2^\intercal \boldsymbol \Sigma \mathbf x_2\right),           \\
        h(\mathbf y)   & =  \frac{\zeta}{2} \|\mathbf y\|^2,
    \end{aligned}
\end{equation*}
and $\iota_{\mathcal S}(\cdot)$ is the indicator function of $\mathcal S = \{\mathbf x \in  \{0,1\}^n: \mathbf x^\intercal  \mathbf 1 = k\}$, such that
\begin{equation*}
    \iota_{\mathcal S}(\mathbf x) = \begin{cases}
        0,       & \mathbf x \in \mathcal S \\
        +\infty, & \text{otherwise}.
    \end{cases}
\end{equation*}
Here, $\lambda > 0$ is the trade-off hyperparameter, and $\zeta > 0$ is a sufficiently large constant.

The hybrid ADMM algorithm addresses the optimization problem \Cref{eq:quadratic_form} using the augmented Lagrangian given by
\begin{equation}
    \label{eq:lagrangian}
    \mathcal L_{\beta}(\mathbf x_1, \mathbf x_2, \mathbf y, \mathbf w) = \phi(\mathbf x_1, \mathbf x_2, \mathbf y) + \mathbf w^\intercal\left(\mathbf x_1 - \mathbf x_2 - \mathbf y\right) + \frac{\beta}{2}\|  \mathbf x_1 - \mathbf x_2 - \mathbf y\|^2.
\end{equation}

We define the updates at the $(t+1)$-th iteration as $(\mathbf x_1^+, \mathbf x_2^+, \mathbf y^+, \mathbf w^+) = (\mathbf x_1^{t+1}, \mathbf x_2^{t+1}, \mathbf y^{t+1}, \mathbf w^{t+1})$ for notational convenience. The proposed method is summarized in Algorithm \ref{alg:admm}.

\begin{algorithm}[!htbp]\label{alg:admm}
    \SetAlgoLined
    % \textbf{Input:} $t=0, \mathbf{x_1}^0 \in {\mathcal{S}}$, $\mathbf x_2^0 = \mathbf x_1^0$, $\mathbf y^0 = \mathbf w^0 = \mathbf 0$, \text{the vector of return} $\boldsymbol{\mu}$, \text{the covariance matrix} $\boldsymbol{\Sigma}$, the trade-off hyperparameter $\lambda$, the tolerance hyperparameters $\epsilon >0$, the penalty coefficient $\zeta > 0$, the penalty coefficient $\beta > \sqrt{2}\zeta$, the maximum number of iterations $T_{\max}$, and initialize the error term $\Delta = +\infty$.
    \textbf{Input:} $t=0, \mathbf{x}_1^0 \in {\mathcal{S}}$, $\mathbf x_2^0 = \mathbf x_1^0$, $\mathbf y^0 = \mathbf w^0 = \mathbf 0$, \text{the vector of return} $\boldsymbol{\mu}$, \text{the covariance matrix} $\boldsymbol{\Sigma}$, the trade-off hyperparameter $\lambda$, the tolerance hyperparameter $\epsilon >0$, the regularization parameter $\zeta > 0$, the augmented Lagrangian penalty parameter $\beta > \sqrt{2}\zeta$ (as required by \Cref{coro:sufficient_descent}), the maximum number of iterations $T_{\max}$, and initialize the error term $\Delta = +\infty$.

    \KwResult{$\mathbf{x}_1^t$}
    \While{$t < T_{\max}$ \KwAnd  $\Delta \ge \dfrac{\epsilon}{\beta + 1}$}{
        \   \\
        \begin{enumerate}
            \item Use the efficient quantum optimization algorithm to solve the binary optimization subproblem:
                  \begin{equation}\label{eq:quadratic_subproblem_quantum}
                      \begin{aligned}
                          \mathbf x_1^{+} & = \argmin_{\mathbf x_1\in \mathcal S}  \ \sum_{\substack{1 \le i , j \le n \\ i \neq j}} \left(x_{1, i} (\boldsymbol{\Sigma} \mathbf x_2^t)_i  - x_{1, j} (\boldsymbol{\Sigma} \mathbf x_2^t)_j\right)^2 \\
                                          & \quad \quad + \mathbf w^t{}^\intercal \mathbf x_1 + \frac{\beta}{2}\|\mathbf x_1 - \mathbf x_2^t - \mathbf y^t\|^2
                      \end{aligned}
                  \end{equation}
            \item Solve the convex optimization subproblem:
                  \begin{equation}\label{eq:admm2}
                      \begin{aligned}
                          \mathbf x_2^{+} & = \argmin_{\mathbf x_2\in\mathbb R^n}   \ \sum_{\substack{1 \le i , j \le n \\ i \neq j}} \left(x_{1, i}^+ (\boldsymbol{\Sigma} \mathbf x_2)_i - x_{1, j}^+ (\boldsymbol{\Sigma} \mathbf x_2)_j\right)^2                                                             \\
                                          & \quad \quad + \lambda\left(-\mathbf x_2^\intercal  \boldsymbol{\mu} + \frac{1}{2}\mathbf x_2^\intercal \boldsymbol \Sigma \mathbf x_2\right) - \mathbf w^t{}^\intercal \mathbf x_2 + \frac{\beta}{2}\|\mathbf x_1^+ - \mathbf x_2 - \mathbf y^t\|^2
                      \end{aligned}
                  \end{equation}
            \item Solve the convex optimization subproblem:
                  \begin{equation}\label{eq:admm3}
                      \begin{aligned}
                          \mathbf y^{+} & = \argmin_{\mathbf y\in\mathbb R^n} \ \frac{\zeta}{2}\|\mathbf y\|^2 - \mathbf w^t{}^\intercal \mathbf y + \frac{\beta}{2}\|\mathbf x_1^+ - \mathbf x_2^+ - \mathbf y\|^2 \\
                      \end{aligned}
                  \end{equation}
            \item Update the dual variable: $\mathbf w^+ = \mathbf w^t + \beta(\mathbf x_1^+ - \mathbf x_2^+ - \mathbf y^+) $.
            \item Update the error term: $\Delta = \|\mathbf y^+ - \mathbf y^{t}\|$
            \item Set $t = t+1$
        \end{enumerate}
    }
    \Return $\mathbf{x}_1^{t}$
    \caption{Hybrid ADMM algorithm for solving risk parity problem~\Cref{eq:quadratic_form}}
\end{algorithm}

Both subproblems \Cref{eq:admm2} and \Cref{eq:admm3} can be solved explicitly. Specifically, for \Cref{eq:admm2}, note that
\begin{equation}
    \label{eq:matrix_Q}
    \begin{aligned}
        \sum_{\substack{1 \le i , j \le n \\ i \neq j}} \left(x_{1, i}^+ (\boldsymbol{\Sigma} \mathbf x_2)_i - x_{1, j}^+ (\boldsymbol{\Sigma} \mathbf x_2)_j\right)^2 & = \sum_{i=1}^n 2(n-1)x_{1, i}^+{}^2 (\boldsymbol{\Sigma} \mathbf x_2)_i^2 - 2\sum_{\substack{1 \le i , j \le n \\ i \neq j}} x_{1, i}^+x_{1, j}^+ (\boldsymbol{\Sigma} \mathbf x_2)_i (\boldsymbol{\Sigma} \mathbf x_2)_j \\
                                                                                                                                                      & = \mathbf x_2^\intercal \mathbf Q \mathbf x_2,
    \end{aligned}
\end{equation}
where $\mathbf Q = \boldsymbol\Sigma \mathbf H \boldsymbol\Sigma$ is a positive semidefinite matrix, and $\mathbf H$ is a $n\times n$ symmetric matrix such that
\begin{equation*}
    H_{ij} = \begin{cases}
        2(n-1)x_{1, i}^+{}^2   & i = j    \\
        -2x_{1, i}^+x_{1, j}^+ & i \ne j.
    \end{cases}
\end{equation*}
As a result,
\begin{equation*}
    \begin{aligned}
        \mathbf x_2^+ & =  \argmin_{\mathbf x_2\in\mathbb R^n} \mathbf x_2^\intercal\mathbf Q\mathbf x_2 + \lambda\left(-\mathbf x_2^\intercal  \boldsymbol{\mu} + \frac{1}{2}\mathbf x_2^\intercal \boldsymbol \Sigma \mathbf x_2\right) - \mathbf w^t{}^\intercal \mathbf x_2 + \frac{\beta}{2}\|\mathbf x_1^+ - \mathbf x_2 - \mathbf y^t\|^2 \\
                      & = \left(2\mathbf Q + \lambda\mathbf \Sigma + \beta\mathbf I\right)^{-1}(\lambda\mathbf \mu + \mathbf w^t +  \beta(\mathbf x_1^+ - \mathbf y^t)).
    \end{aligned}
\end{equation*}
For \Cref{eq:admm3},
\begin{equation}
    \label{eq:y_update}
    \mathbf y^{+} = \frac{1}{\zeta + \beta}\left(\mathbf w^{t} + \beta(\mathbf x_1^+ - \mathbf x_2^+) \right).
\end{equation}

The work in \cite{wang2019global} analyzes a quite general setting for non-convex non-smooth optimization via ADMM under a series of regularity assumptions. In our specific context of risk parity, we directly verify the convergence of the proposed Algorithm \ref{alg:admm} by establishing the following four key properties (indeed, our optimization problem satisfies those regularity assumptions as well):
\begin{enumerate}
    \item[P1] \hypertarget{P1}{({\bf Boundedness of the Sequence})} The sequence $\left\{\mathbf{x}_1^t, \mathbf x_2^t, \mathbf y^t, \mathbf w^t\right\}$ is bounded, and $ \mathcal L_{\beta}(\mathbf x_1^t, \mathbf x_2^t, \mathbf y^t, \mathbf w^t)$ is lower bounded.
    \item[P2] \hypertarget{P2}{({\bf Sufficient Descent})} There is a constant $C_1(\beta)>0$ such that for all $t$, we have
          \begin{equation*}
              \mathcal L_{\beta}(\mathbf x_1^t, \mathbf x_2^t, \mathbf y^t, \mathbf w^t) - \mathcal L_{\beta}(\mathbf x_1^+, \mathbf x_2^+, \mathbf y^+, \mathbf w^+) \ge C_1(\beta) \left(\|\mathbf x_2^+ - \mathbf x_2^t\|^2 + \|\mathbf y^+ - \mathbf y^t\|^2\right).
          \end{equation*}
    \item[P3] \hypertarget{P3}{({\bf Subgradient bound})} There exists $C_2(\beta)>0$ and $\mathbf d^{+} \in \partial \mathcal L_{\beta}(\mathbf x_1^{+}, \mathbf x_2^{+}, \mathbf y^{+}, \mathbf w^{+})$ such that
          \begin{equation*}
              \|\mathbf d^{+}\| \le C_2(\beta)\left(\|\mathbf x_2^+ - \mathbf x_2^t\| + \|\mathbf y^+ - \mathbf y^t\|\right).
          \end{equation*}
    \item[P4] \hypertarget{P4}{({\bf Limiting continuity})} If $(\mathbf{x}_1^*, \mathbf x_2^*, \mathbf y^*, \mathbf w^*)$ is the limit point of a sub-sequence $(\mathbf{x}_1^{t_s}, \mathbf x_2^{t_s}, \mathbf y^{t_s}, \mathbf w^{t_s})$ for $s \in \mathbb{N}$, then $\mathcal{L}_\beta(\mathbf{x}_1^*, \mathbf x_2^*, \mathbf y^*, \mathbf w^*)=\lim _{s \rightarrow \infty} \mathcal{L}_\beta(\mathbf{x}_1^{t_s}, \mathbf x_2^{t_s}, \mathbf y^{t_s}, \mathbf w^{t_s})$.
\end{enumerate}

We verify the four properties in the order of P2 $\rightarrow$ P1 $\rightarrow$ P3 $\rightarrow$ P4. To establish the sufficient descent property \hyperlink{P2}{P2}, we first present the following two propositions:
\begin{proposition}[cf. Lemma 4 in \cite{wang2019global}]
    \label{prop:suff_descent_1}
    The iterates in Algorithm \ref{alg:admm} satisfy
    \begin{equation}
        \label{eq:suff_descent_1}
        \mathcal L_{\beta}(\mathbf x_1^t, \mathbf x_2^t, \mathbf y^t, \mathbf w^t) -  \mathcal L_{\beta}(\mathbf x_1^+, \mathbf x_2^+, \mathbf y^t, \mathbf w^t) \ge \frac{\beta}{2}\|\mathbf x_2^+ - \mathbf x_2^t\|^2.
    \end{equation}
\end{proposition}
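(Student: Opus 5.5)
The plan is to split the decrease into two steps, the $\mathbf x_1$-update and the $\mathbf x_2$-update, and bound each separately:
\begin{equation*}
\begin{aligned}
\mathcal L_{\beta}(\mathbf x_1^t, \mathbf x_2^t, \mathbf y^t, \mathbf w^t) - \mathcal L_{\beta}(\mathbf x_1^+, \mathbf x_2^+, \mathbf y^t, \mathbf w^t)
={}& \left[\mathcal L_{\beta}(\mathbf x_1^t, \mathbf x_2^t, \mathbf y^t, \mathbf w^t) - \mathcal L_{\beta}(\mathbf x_1^+, \mathbf x_2^t, \mathbf y^t, \mathbf w^t)\right] \\
&+ \left[\mathcal L_{\beta}(\mathbf x_1^+, \mathbf x_2^t, \mathbf y^t, \mathbf w^t) - \mathcal L_{\beta}(\mathbf x_1^+, \mathbf x_2^+, \mathbf y^t, \mathbf w^t)\right].
\end{aligned}
\end{equation*}
I would show that the first bracket is nonnegative and that the second is at least $\frac{\beta}{2}\|\mathbf x_2^+ - \mathbf x_2^t\|^2$.

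\textbf{First bracket.} First I will note that $\mathbf x_1^t \in \mathcal S$ for every $t$. This holds for $t=0$ by the initialization, and for $t \ge 1$ because each $\mathbf x_1^+$ is produced by a minimization over $\mathcal S$. Hence $\iota_{\mathcal S}(\mathbf x_1^t)=0$ and all Lagrangian values involved are finite.

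Next, with $(\mathbf x_2^t,\mathbf y^t,\mathbf w^t)$ fixed, I restrict $\mathbf x_1 \mapsto \mathcal L_\beta(\mathbf x_1,\mathbf x_2^t,\mathbf y^t,\mathbf w^t)$ to $\mathcal S$. On $\mathcal S$ this function equals the objective of \Cref{eq:quadratic_subproblem_quantum} plus terms independent of $\mathbf x_1$, namely $f(\mathbf x_2^t)$, $h(\mathbf y^t)$ and $-\mathbf w^{t\intercal}(\mathbf x_2^t+\mathbf y^t)$. Off $\mathcal S$ it equals $+\infty$.

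So $\mathbf x_1^+$, being an exact global minimizer over $\mathcal S$ (by \Cref{thm:hard_constrained_grover} and GAS), minimizes this partial Lagrangian. Since $\mathbf x_1^t \in \mathcal S$ is a feasible competitor, the first bracket is $\ge 0$.

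\textbf{Second bracket.} With $\mathbf x_1^+$ fixed, the map $\mathbf x_2\mapsto\mathcal L_\beta(\mathbf x_1^+,\mathbf x_2,\mathbf y^t,\mathbf w^t)$ is, by \Cref{eq:matrix_Q}, a quadratic with Hessian $2\mathbf Q+\lambda\boldsymbol\Sigma+\beta\mathbf I$. The terms $\iota_{\mathcal S}(\mathbf x_1^+)$ and $h(\mathbf y^t)$ are constant in $\mathbf x_2$. Then:
\begin{itemize}
\item $\mathbf Q\succeq 0$, because $\mathbf x_2^\intercal\mathbf Q\mathbf x_2$ is a sum of squares.
\item $\lambda\boldsymbol\Sigma\succeq \lambda c_3\mathbf I\succeq 0$, by Non-Degeneracy.
\end{itemize}
So the Hessian is $\succeq\beta\mathbf I$, and the function is $\beta$-strongly convex. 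Since $\mathbf x_2^+$ is its exact minimizer, the gradient vanishes there. The second-order Taylor expansion around $\mathbf x_2^+$, which is exact for a quadratic, then gives
\begin{equation*}
\mathcal L_\beta(\mathbf x_1^+,\mathbf x_2^t,\mathbf y^t,\mathbf w^t)-\mathcal L_\beta(\mathbf x_1^+,\mathbf x_2^+,\mathbf y^t,\mathbf w^t)=\tfrac12(\mathbf x_2^t-\mathbf x_2^+)^\intercal(2\mathbf Q+\lambda\boldsymbol\Sigma+\beta\mathbf I)(\mathbf x_2^t-\mathbf x_2^+)\ge\tfrac{\beta}{2}\|\mathbf x_2^t-\mathbf x_2^+\|^2 .
\end{equation*}
Adding the two brackets yields \Cref{eq:suff_descent_1}.

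\textbf{Main obstacle.} The only delicate point is the first bracket. It relies on the quantum subroutine returning an exact global minimizer of the $\mathbf x_1$-subproblem, and on checking carefully that the subproblem objective coincides with the partial Lagrangian up to $\mathbf x_1$-independent constants. In particular, the term $\mathbf w^{t\intercal}\mathbf x_1$ and the full $g(\mathbf x_1,\mathbf x_2^t)$ must appear with exactly the signs used in \Cref{eq:lagrangian}.

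If the GAS output were only correct with high probability, I would state the inequality conditional on success of that subroutine. The rest of the argument is the standard strong-convexity step.
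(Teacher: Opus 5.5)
Your proposal is correct and follows essentially the same route as the paper. The paper also discards the $\mathbf x_1$-step using the exact minimality of $\mathbf x_1^+$ over $\mathcal S$, then bounds the $\mathbf x_2$-step via the stationarity condition for $\mathbf x_2^+$ together with convexity of $f$ and $\hat g$ (leaving $\frac{\beta}{2}\|\mathbf x_2^t-\mathbf x_2^+\|^2$ from the penalty term), which is just a term-by-term version of your single exact expansion with Hessian $2\mathbf Q+\lambda\boldsymbol\Sigma+\beta\mathbf I\succeq\beta\mathbf I$.
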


\begin{proof}
    The update of Algorithm \ref{alg:admm} implies that
    \begin{equation}
        \begin{aligned}
            \label{eq:succ_diff_1}
            \mathcal L_{\beta}(\mathbf x_1^t, \mathbf x_2^t, \mathbf y^t, \mathbf w^t) -  \mathcal L_{\beta}(\mathbf x_1^+, \mathbf x_2^+, \mathbf y^t, \mathbf w^t)
             & \ge  \mathcal L_{\beta}(\mathbf x_1^+, \mathbf x_2^t, \mathbf y^t, \mathbf w^t) -  \mathcal L_{\beta}(\mathbf x_1^+, \mathbf x_2^+, \mathbf y^t, \mathbf w^t)         \\
             & = f(\mathbf x_2^t) - f(\mathbf x_2^+) + g(\mathbf x_1^+, \mathbf x_2^t) - g(\mathbf x_1^+, \mathbf x_2^+) - \mathbf w^t{}^\intercal (\mathbf x_2^t - \mathbf x_2^+) + \\
             & \hphantom{ = } + \frac{\beta}{2}\|\mathbf x_1^+ - \mathbf x_2^t - \mathbf y^t\|^2 - \frac{\beta}{2}\|\mathbf x_1^+ - \mathbf x_2^+ - \mathbf y^t\|^2
        \end{aligned}
    \end{equation}
    While the optimality conditon for subproblem of $\mathbf x_2$, $\nabla_{\mathbf x_2} \mathcal L_\beta(\mathbf x_1^+, \mathbf x_2, \mathbf y^t, \mathbf w^t) = \mathbf 0$ implies
    \begin{equation}
        \label{eq:optimal_x2}
        \nabla f(\mathbf x_2^+) + \nabla_2 g(\mathbf x_1^+, \mathbf x_2^+) = \mathbf w^t + \beta(\mathbf x_1^+ - \mathbf x_2^+ - \mathbf y^t)
    \end{equation}
    Combining \Cref{eq:succ_diff_1} and \Cref{eq:optimal_x2}, we have that
    \begin{equation}
        \label{eq:succ_diff_2}
        \begin{aligned}
            \mathcal L_{\beta}(\mathbf x_1^+, \mathbf x_2^t, \mathbf y^t, \mathbf w^t) -  \mathcal L_{\beta}(\mathbf x_1^+, \mathbf x_2^+, \mathbf y^t, \mathbf w^t) & =
            f(\mathbf x_2^t) - f(\mathbf x_2^+) -  [\nabla f(\mathbf x_2^+)]^\intercal (\mathbf x_2^t - \mathbf x_2^+) + g(\mathbf x_1^+, \mathbf x_2^t) - g(\mathbf x_1^+, \mathbf x_2^+)                                                                                                                                           \\
                                                                                                                                                                     & \hphantom{ = }  - [\nabla_2 ~g(\mathbf x_1^+, \mathbf x_2^+)]^\intercal (\mathbf x_2^t - \mathbf x_2^+) + \frac{\beta}{2}\|\mathbf x_2^t - \mathbf x_2^+ \|^2 \\
        \end{aligned}
    \end{equation}
    Specifically, since in our model, $f(\mathbf x_2)$ is convex and differentiable, we have that
    \begin{equation}
        \label{eq:convexity_f}
        f(\mathbf x_2^t) - f(\mathbf x_2^+) -  [\nabla f(\mathbf x_2^+)]^\intercal(\mathbf x_2^t - \mathbf x_2^+) \ge 0.
    \end{equation}
    Additionally, let $\hat g(\mathbf x_2) = g(\mathbf x_1^+, \mathbf x_2)$, then $\nabla_2 ~g(\mathbf x_1^+, \mathbf x_2) = \nabla \hat g(\mathbf x_2)$. According to \Cref{eq:matrix_Q}, $\hat g(\mathbf x_2) = \mathbf x_2^\intercal \mathbf Q \mathbf x_2 $, where $\mathbf Q$ is a PSD matrix. This implies that $\hat g(\mathbf x_2)$ is also a convex differentiable function, and
    \begin{equation}
        \label{eq:convexity_g}
        \hat g(\mathbf x_2^t) - \hat g(\mathbf x_2^+) -  [\nabla g(\mathbf x_2^+)]^\intercal(\mathbf x_2^t - \mathbf x_2^+) \ge 0
    \end{equation}
    Combining \Cref{eq:succ_diff_2}, \Cref{eq:convexity_f}, and \Cref{eq:convexity_g}, it is immediate that
    \begin{equation*}
        \mathcal L_{\beta}(\mathbf x_1^+, \mathbf x_2^t, \mathbf y^t, \mathbf w^t) -  \mathcal L_{\beta}(\mathbf x_1^+, \mathbf x_2^+, \mathbf y^t, \mathbf w^t) \ge \frac{\beta}{2}\|\mathbf x_2^+ - \mathbf x_2^t \|^2.
    \end{equation*}
    The proposition is hereby proved.
\end{proof}

\begin{proposition}[cf. Lemma 5 in \cite{wang2019global}]
    \label{prop:suff_descent_2}
    The iterates in Algorithm \ref{alg:admm} satisfy
    \begin{equation}
        \label{eq:suff_descent_2}
        \mathcal L_{\beta}(\mathbf x_1^+, \mathbf x_2^+, \mathbf y^t, \mathbf w^t) -  \mathcal L_{\beta}(\mathbf x_1^+, \mathbf x_2^+, \mathbf y^+, \mathbf w^+) \ge \left(\frac{\beta}{2} - \frac{\zeta^2}{\beta}\right)\|\mathbf y^+ - \mathbf y^t\|^2
    \end{equation}
\end{proposition}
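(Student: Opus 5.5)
Only the $\mathbf y$- and $\mathbf w$-blocks change, so I will split the difference into a $\mathbf y$-step and a $\mathbf w$-step. Write $\mathbf r^+ := \mathbf x_1^+ - \mathbf x_2^+ - \mathbf y^+$. Then
\begin{equation*}
\begin{aligned}
&\mathcal L_{\beta}(\mathbf x_1^+, \mathbf x_2^+, \mathbf y^t, \mathbf w^t) - \mathcal L_{\beta}(\mathbf x_1^+, \mathbf x_2^+, \mathbf y^+, \mathbf w^+) \\
&\quad = \big[\mathcal L_{\beta}(\mathbf x_1^+, \mathbf x_2^+, \mathbf y^t, \mathbf w^t) - \mathcal L_{\beta}(\mathbf x_1^+, \mathbf x_2^+, \mathbf y^+, \mathbf w^t)\big] + \big[\mathcal L_{\beta}(\mathbf x_1^+, \mathbf x_2^+, \mathbf y^+, \mathbf w^t) - \mathcal L_{\beta}(\mathbf x_1^+, \mathbf x_2^+, \mathbf y^+, \mathbf w^+)\big].
\end{aligned}
\end{equation*}
Since $\mathcal L_\beta$ is linear in $\mathbf w$, the second bracket equals $-(\mathbf w^t-\mathbf w^+)^\intercal \mathbf r^+ \cdot(-1) = -\frac{1}{\beta}\|\mathbf w^+ - \mathbf w^t\|^2$, using $\mathbf w^+ - \mathbf w^t = \beta \mathbf r^+$ from step 4 of Algorithm \ref{alg:admm}.

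For the first bracket, I will use the same mechanism as in \Cref{prop:suff_descent_1}. As a function of $\mathbf y$, $\mathcal L_\beta(\mathbf x_1^+,\mathbf x_2^+,\cdot,\mathbf w^t)$ equals $h(\mathbf y)$ plus a linear term plus $\frac\beta2\|\cdot\|^2$. It is therefore $(\zeta+\beta)$-strongly convex, and $\mathbf y^+$ is its exact minimizer by \Cref{eq:admm3}. Expanding around the minimizer gives the first bracket $= \frac{\zeta+\beta}{2}\|\mathbf y^+ - \mathbf y^t\|^2 \ge \frac{\beta}{2}\|\mathbf y^+-\mathbf y^t\|^2$. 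The quadratic is exact, so only the $\frac{\beta}{2}$ part is needed.

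The main step is controlling the dual increase $\frac1\beta\|\mathbf w^+-\mathbf w^t\|^2$ by $\|\mathbf y^+-\mathbf y^t\|$. From the optimality condition of \Cref{eq:admm3}, $\zeta \mathbf y^+ - \mathbf w^t - \beta \mathbf r^+ = \mathbf 0$, so $\zeta\mathbf y^+ = \mathbf w^t + \beta\mathbf r^+ = \mathbf w^+$. This is consistent with \Cref{eq:y_update}. Hence $\mathbf w^{s} = \zeta \mathbf y^{s}$ for every $s\ge 1$. At $s=0$ the identity also holds, since $\mathbf w^0=\mathbf y^0=\mathbf 0$ by the initialization. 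Therefore $\|\mathbf w^+-\mathbf w^t\| = \zeta\|\mathbf y^+-\mathbf y^t\|$, and the dual term is $\frac{\zeta^2}{\beta}\|\mathbf y^+-\mathbf y^t\|^2$.

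Combining the two brackets yields $\left(\frac{\beta}{2}-\frac{\zeta^2}{\beta}\right)\|\mathbf y^+-\mathbf y^t\|^2$ as a lower bound, which is \Cref{eq:suff_descent_2}. I expect the only delicate point to be verifying that $\mathbf w^t=\zeta\mathbf y^t$ holds at every iterate, including $t=0$; the initialization takes care of it. In fact the slack $\frac{\zeta}{2}\|\mathbf y^+-\mathbf y^t\|^2$ could be kept for a sharper constant, but the stated bound only requires the $\frac\beta2$ part.
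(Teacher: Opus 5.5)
Your proof is correct and rests on the same ingredients as the paper's proof: the dual update $\mathbf w^+-\mathbf w^t=\beta(\mathbf x_1^+-\mathbf x_2^+-\mathbf y^+)$, the identity $\mathbf w^s=\zeta\mathbf y^s$ at every iterate (you check $s=0$ explicitly via the initialization, which the paper uses only implicitly), and convexity of the $\mathbf y$-part. The difference is bookkeeping. The paper expands the whole difference at once into the Bregman term $h(\mathbf y^t)-h(\mathbf y^+)-\nabla h(\mathbf y^+)^\intercal(\mathbf y^t-\mathbf y^+)\ge 0$ plus $\left(\frac{\beta}{2}-\frac{\zeta^2}{\beta}\right)\|\mathbf y^+-\mathbf y^t\|^2$. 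You instead telescope through $(\mathbf y^+,\mathbf w^t)$ and obtain the exact value $\left(\frac{\zeta+\beta}{2}-\frac{\zeta^2}{\beta}\right)\|\mathbf y^+-\mathbf y^t\|^2$, which is the same quantity.
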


\begin{proof}
    In our context,
    \begin{equation}
        \label{eq:succ_diff_3}
        \begin{aligned}
            \mathcal L_{\beta}(\mathbf x_1^+, \mathbf x_2^+, \mathbf y^t, \mathbf w^t) -  \mathcal L_{\beta}(\mathbf x_1^+, \mathbf x_2^+, \mathbf y^+, \mathbf w^+)
             & = h(\mathbf y^t) - h(\mathbf y^+) + (\mathbf w^t)^\intercal(\mathbf x_1^+ - \mathbf x_2^+ - \mathbf y^t) - (\mathbf w^+)^\intercal(\mathbf x_1^+ - \mathbf x_2^+ - \mathbf y^+) \\
             & \hphantom{ = }+ \frac{\beta}{2}\|\mathbf x_1^+ - \mathbf x_2^+ - \mathbf y^t\|^2 -  \frac{\beta}{2}\|\mathbf x_1^+ - \mathbf x_2^+ - \mathbf y^+\|^2
        \end{aligned}
    \end{equation}
    Since $\mathbf w^+ - \mathbf w^t = \beta(\mathbf x_1^+ - \mathbf x_2^+ - \mathbf y^+)$ and $\zeta \mathbf y^+  = \mathbf w^+$ (from \Cref{eq:y_update}), \Cref{eq:succ_diff_3} can be reformulated as
    \begin{equation}
        \label{eq:succ_diff_4}
        \mathcal L_{\beta}(\mathbf x_1^+, \mathbf x_2^+, \mathbf y^t, \mathbf w^t) -  \mathcal L_{\beta}(\mathbf x_1^+, \mathbf x_2^+, \mathbf y^+, \mathbf w^+) = h(\mathbf y^t) - h(\mathbf y^+) - (\mathbf w^+)^\intercal (\mathbf y^t - \mathbf y^+) + \left(\frac{\beta}{2} - \frac{\zeta^2}{\beta}\right)\|\mathbf y^t - \mathbf y^+\|^2.
    \end{equation}
    The optimality condition for subproblem of $\mathbf y$, $\nabla_{\mathbf y} \mathcal L_\beta(\mathbf x_1^+, \mathbf x_2, \mathbf y^t, \mathbf w^t) = \mathbf 0$ implies
    \begin{equation}
        \label{eq:nabla_y_w}
        \nabla h(\mathbf y^+) - \mathbf w^t - \beta(\mathbf x_1^+ - \mathbf x_2^+ - \mathbf y^+) = \nabla h(\mathbf y^+) - \mathbf w^+ = 0.
    \end{equation}
    Plugging \Cref{eq:nabla_y_w} into \Cref{eq:succ_diff_4}, we have
    \begin{equation*}
        \mathcal L_{\beta}(\mathbf x_1^+, \mathbf x_2^+, \mathbf y^t, \mathbf w^t) -  \mathcal L_{\beta}(\mathbf x_1^+, \mathbf x_2^+, \mathbf y^+, \mathbf w^+) = h(\mathbf y^t) - h(\mathbf y^+) - [\nabla h(\mathbf y^+)]^\intercal (\mathbf y^t - \mathbf y^+) + \left(\frac{\beta}{2} - \frac{\zeta^2}{\beta}\right)\|\mathbf y^t - \mathbf y^+\|^2.
    \end{equation*}
    However, since $h(\mathbf y) = \frac{\zeta}{2}\|\mathbf y\|^2$ is a convex function, we must have $ h(\mathbf y^t) - h(\mathbf y^+) - [\nabla h(\mathbf y^+)]^\intercal (\mathbf y^t - \mathbf y^+) \ge 0$. As a result, \Cref{eq:suff_descent_2} holds.
\end{proof}

Combining Proposition~\ref{prop:suff_descent_1} and ~\ref{prop:suff_descent_2}, we immediately have the following corollary for the sufficient descent property \hyperlink{P2}{P2}:
\begin{cor}
    \label{coro:sufficient_descent}
    Let $\beta > \sqrt{2}\zeta$, the sufficient descent property \hyperlink{P2}{P2} holds with $C_1(\beta) = \dfrac{\beta}{2} - \dfrac{\zeta^2}{\beta}$.
\end{cor}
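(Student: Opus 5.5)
The plan is to telescope the one-step decrease of $\mathcal L_\beta$ through the intermediate point $(\mathbf x_1^+,\mathbf x_2^+,\mathbf y^t,\mathbf w^t)$ and add the two bounds already established. Concretely, I will write
\begin{equation*}
\mathcal L_{\beta}(\mathbf x_1^t, \mathbf x_2^t, \mathbf y^t, \mathbf w^t) - \mathcal L_{\beta}(\mathbf x_1^+, \mathbf x_2^+, \mathbf y^+, \mathbf w^+) = \Big[\mathcal L_{\beta}(\mathbf x_1^t, \mathbf x_2^t, \mathbf y^t, \mathbf w^t) - \mathcal L_{\beta}(\mathbf x_1^+, \mathbf x_2^+, \mathbf y^t, \mathbf w^t)\Big] + \Big[\mathcal L_{\beta}(\mathbf x_1^+, \mathbf x_2^+, \mathbf y^t, \mathbf w^t) - \mathcal L_{\beta}(\mathbf x_1^+, \mathbf x_2^+, \mathbf y^+, \mathbf w^+)\Big].
\end{equation*}
The first bracket is bounded below by $\frac{\beta}{2}\|\mathbf x_2^+-\mathbf x_2^t\|^2$ by Proposition~\ref{prop:suff_descent_1}. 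The second is bounded below by $\left(\frac{\beta}{2}-\frac{\zeta^2}{\beta}\right)\|\mathbf y^+-\mathbf y^t\|^2$ by Proposition~\ref{prop:suff_descent_2}.

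Next, since $\frac{\zeta^2}{\beta}>0$, we have $\frac{\beta}{2}\ge \frac{\beta}{2}-\frac{\zeta^2}{\beta}$. So the first coefficient may be lowered to the common value $C_1(\beta)=\frac{\beta}{2}-\frac{\zeta^2}{\beta}$, and the sum is at least $C_1(\beta)\left(\|\mathbf x_2^+-\mathbf x_2^t\|^2+\|\mathbf y^+-\mathbf y^t\|^2\right)$, which is the form required in \hyperlink{P2}{P2}.

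It remains to check that $C_1(\beta)>0$. This holds exactly when $\beta^2>2\zeta^2$, i.e.\ $\beta>\sqrt2\zeta$, using that $\beta,\zeta>0$. That is precisely the hypothesis of the corollary.

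There is no real obstacle here. The only points to verify are that Propositions~\ref{prop:suff_descent_1} and~\ref{prop:suff_descent_2} hold for every $t$, so the bound is uniform in $t$, and that the intermediate Lagrangian values are finite so the telescoping is legitimate. Finiteness follows because $\mathbf x_1^t,\mathbf x_1^+\in\mathcal S$: the initialization takes $\mathbf x_1^0\in\mathcal S$, and each $\mathbf x_1$-update is an $\argmin$ over $\mathcal S$. Hence $\iota_{\mathcal S}$ vanishes at every point involved.
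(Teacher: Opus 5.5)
Your proposal is correct and is essentially the paper's own argument: the paper obtains the corollary by adding the bounds of Propositions~\ref{prop:suff_descent_1} and~\ref{prop:suff_descent_2} through the intermediate point $(\mathbf x_1^+,\mathbf x_2^+,\mathbf y^t,\mathbf w^t)$ and lowering the coefficient $\frac{\beta}{2}$ to $C_1(\beta)=\frac{\beta}{2}-\frac{\zeta^2}{\beta}$, which is positive precisely when $\beta>\sqrt2\zeta$. Your extra remarks, that the bound is uniform in $t$ and that the intermediate Lagrangian values are finite because every $\mathbf x_1$-iterate lies in $\mathcal S$, only make explicit what the paper leaves implicit.
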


Now we turn to the property of the boundedness of the sequence \hyperlink{P1}{P1}. To validate this property, we first prove the coercivity (or level-boundedness in the terminology of \cite{rockafellarVariationalAnalysis1998}) of objective function, stated in the following proposition:

\begin{proposition}[Coercivity]
    \label{prop:coercivity}
    Define the feasible set $\mathcal{F}:=\left\{(\mathbf x_1, \mathbf x_2, \mathbf y) \in \mathbb{R}^{3n}: \mathbf x_1 - \mathbf x_2 - \mathbf y=0\right\}$. The objective function $\phi(\mathbf x_1, \mathbf x_2, \mathbf y)$ is coercive over this set, that is, $\phi(\mathbf x_1, \mathbf x_2, \mathbf y) \rightarrow \infty$ if $(\mathbf x_1, \mathbf x_2, \mathbf y) \in \mathcal{F}$ and $\|(\mathbf x_1, \mathbf x_2, \mathbf y)\| \rightarrow \infty$.
\end{proposition}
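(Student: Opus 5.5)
The plan is to bound $\phi$ from below on the constraint set $\mathcal{F}$ by a quantity that blows up with $\|(\mathbf x_1,\mathbf x_2,\mathbf y)\|$. First, if $\mathbf x_1\notin\mathcal S$ then $\iota_{\mathcal S}(\mathbf x_1)=+\infty$ and there is nothing to show. So we restrict to $\mathbf x_1\in\mathcal S$. Since $\mathcal S\subseteq\{0,1\}^n$ is a finite set, this gives the uniform bound $\|\mathbf x_1\|=\sqrt{k}\le\sqrt n$. On $\mathcal{F}$ we have $\mathbf y=\mathbf x_1-\mathbf x_2$. Hence
\begin{equation*}
\|(\mathbf x_1,\mathbf x_2,\mathbf y)\|\le \|\mathbf x_1\|+\|\mathbf x_2\|+\|\mathbf x_1-\mathbf x_2\|\le 2\sqrt n+2\|\mathbf x_2\|,
\end{equation*}
so $\|(\mathbf x_1,\mathbf x_2,\mathbf y)\|\to\infty$ along feasible points with finite objective forces $\|\mathbf x_2\|\to\infty$. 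It therefore suffices to show that $\phi\to\infty$ as $\|\mathbf x_2\|\to\infty$, uniformly over $\mathbf x_1\in\mathcal S$.

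Next I lower bound each term of $\phi$.

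\textbf{The quartic term.} The term $g(\mathbf x_1,\mathbf x_2)$ is a sum of squares, so $g\ge 0$. Equivalently, by \Cref{eq:matrix_Q}, $g=\mathbf x_2^\intercal\mathbf Q\mathbf x_2$ with $\mathbf Q$ positive semidefinite.

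\textbf{The quadratic term $f$.} By the \hyperlink{lower_bound}{Non-Degeneracy} assumption, $\mathbf x_2^\intercal\boldsymbol\Sigma\mathbf x_2\ge c_3\|\mathbf x_2\|^2$. By \hyperlink{upper_bound}{Data Boundedness} and Cauchy--Schwarz, $|\mathbf x_2^\intercal\boldsymbol\mu|\le \|\boldsymbol\mu\|\,\|\mathbf x_2\|\le c_1\sqrt n\,\|\mathbf x_2\|$. Together these give
\begin{equation*}
f(\mathbf x_2)\ge \lambda\Bigl(\tfrac{c_3}{2}\|\mathbf x_2\|^2-c_1\sqrt n\,\|\mathbf x_2\|\Bigr).
\end{equation*}

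\textbf{The regularizer $h$.} Clearly $h(\mathbf y)\ge 0$.

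Summing the three bounds, on $\mathcal{F}$ with $\mathbf x_1\in\mathcal S$ we get $\phi\ge \lambda\bigl(\tfrac{c_3}{2}\|\mathbf x_2\|^2-c_1\sqrt n\|\mathbf x_2\|\bigr)$. The right-hand side tends to $\infty$ as $\|\mathbf x_2\|\to\infty$, since $\lambda>0$ and $c_3>0$.

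The only delicate point is the reduction to $\|\mathbf x_2\|\to\infty$. It relies on the indicator function to bound $\mathbf x_1$, and on the linear constraint to control $\mathbf y$ through $\mathbf x_1$ and $\mathbf x_2$. Once that reduction is in place, the rest is routine. An alternative route uses $h(\mathbf y)=\tfrac{\zeta}{2}\|\mathbf x_1-\mathbf x_2\|^2$, which also grows quadratically in $\|\mathbf x_2\|$ and would remove the need for the Non-Degeneracy assumption. I would note this but present the $f$-based bound, which is the one aligned with the stated assumptions.
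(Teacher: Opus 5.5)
Your proof is correct and follows essentially the same route as the paper: the indicator disposes of $\mathbf x_1\notin\mathcal S$, the constraint forces $\|\mathbf x_2\|\to\infty$, and $g\ge 0$ is combined with the Non-Degeneracy lower bound on $f$. Your explicit inequality $\|(\mathbf x_1,\mathbf x_2,\mathbf y)\|\le 2\sqrt n+2\|\mathbf x_2\|$ makes the reduction slightly more rigorous than the paper's informal case split on $\|\mathbf x_1\|$. The paper also notes $h(\mathbf y)\to\infty$, which, as your argument shows, is not needed since $h\ge 0$ suffices.
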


\begin{proof}
    If $\|\mathbf x_1 \| \to +\infty$, then by definition of $\iota_{\mathcal S}(\mathbf x_1)$ we immediately have $\phi(\mathbf x_1, \mathbf x_2, \mathbf y) \to +\infty$. Therefore, we only need to consider the case when  $\|(\mathbf x_1, \mathbf x_2, \mathbf y)\| \to +\infty$ with $\|\mathbf x_1\|$ bounded. Given the constraint, we must have that $\|\mathbf x_2\| \to +\infty$ and $\|\mathbf y\| \to +\infty$.

    Both $g(\mathbf x_1, \mathbf x_2)$ and $\iota_{\mathcal S}(\mathbf x_1)$ are non-negative. With the covariance matrix $\boldsymbol{\Sigma}$ being positive definite, we have
    \begin{equation*}
        f(\mathbf x_2) \ge \lambda\left(\dfrac{\gamma_{\min}(\boldsymbol{\Sigma})}{2} \|\mathbf x_2\|^2 - \|\boldsymbol \mu\| \|\mathbf x_2\|\right) \to +\infty \quad \text{as } \|\mathbf x_2\| \to +\infty,
    \end{equation*}
    where $\gamma_{\min}(\cdot)$ refers to the smallest eigenvalue of the corresponding matrix. It is straightforward that $ \frac{\zeta}{2} \|\mathbf y\|^2 \to +\infty$ as $\|\mathbf y\| \to +\infty$. In summary, the coercivity condition holds.
\end{proof}

We now prove property \hyperlink{P1}{P1} holds for our optimization problem:
\begin{proposition}[Boundedness of the sequence, cf. Lemma 6 in \cite{wang2019global}]
    \label{prop:boundedness}
    When $\beta > \zeta$, the sequence $\left\{\mathbf{x}_1^t, \mathbf x_2^t, \mathbf y^t, \mathbf w^t\right\}$ generated by Algorithm \ref{alg:admm} satisfies Property \hyperlink{P1}{P1}.
\end{proposition}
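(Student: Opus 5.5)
We follow the template of Lemma 6 in \cite{wang2019global}. The plan is to first show that the augmented Lagrangian along the iterates is bounded below. Once that holds, sufficient descent (\Cref{coro:sufficient_descent}) forces the iterates into a sublevel set of $\phi$ over the feasible set, and coercivity (\Cref{prop:coercivity}) makes that sublevel set bounded.

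\textbf{Step 1: a key identity.} From the optimality condition \Cref{eq:nabla_y_w} we have $\mathbf w^{t} = \zeta \mathbf y^{t} = \nabla h(\mathbf y^t)$ for every $t\ge 1$. The same identity holds at $t=0$, since $\mathbf y^0=\mathbf w^0=\mathbf 0$. Write $\mathbf r^t := \mathbf x_1^t-\mathbf x_2^t-\mathbf y^t$ and set $\mathbf y' := \mathbf x_1^t-\mathbf x_2^t$, so that $(\mathbf x_1^t,\mathbf x_2^t,\mathbf y')$ lies in the feasible set of \Cref{prop:coercivity}. Because $h$ is a quadratic with Hessian $\zeta \mathbf I$, expanding $h(\mathbf y')$ around $\mathbf y^t$ gives
\begin{equation*}
h(\mathbf y^t) + (\mathbf w^t)^\intercal \mathbf r^t = h(\mathbf y') - \frac{\zeta}{2}\|\mathbf r^t\|^2 .
\end{equation*}
Substituting this into the Lagrangian yields
\begin{equation*}
\mathcal L_\beta(\mathbf x_1^t,\mathbf x_2^t,\mathbf y^t,\mathbf w^t) = \phi(\mathbf x_1^t,\mathbf x_2^t,\mathbf y') + \frac{\beta-\zeta}{2}\|\mathbf r^t\|^2 .
\end{equation*}
When $\beta>\zeta$ the last term is nonnegative. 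Moreover $\phi$ is bounded below: $g\ge0$, $h\ge0$, $\iota_{\mathcal S}\ge 0$, and $f$ is a strongly convex quadratic by non-degeneracy. Hence $\mathcal L_\beta$ is bounded below along the iterates, which is the second half of \hyperlink{P1}{P1}.

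\textbf{Step 2: a uniform upper bound.} Since $\beta>\sqrt2\zeta>\zeta$, \Cref{coro:sufficient_descent} shows that $\mathcal L_\beta^t$ is nonincreasing. Therefore $\mathcal L_\beta^t\le \mathcal L_\beta^0$. The initialization gives $\mathbf x_1^0=\mathbf x_2^0\in\mathcal S$ and $\mathbf y^0=\mathbf w^0=\mathbf 0$, so $\mathcal L^0_\beta = g(\mathbf x_1^0,\mathbf x_1^0)+f(\mathbf x_1^0)$, which is finite.

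\textbf{Step 3: bounding the iterates.} Combining Step 2 with the identity of Step 1 gives two bounds:
\begin{equation*}
\phi(\mathbf x_1^t,\mathbf x_2^t,\mathbf y')\le \mathcal L_\beta^0, \qquad \frac{\beta-\zeta}{2}\|\mathbf r^t\|^2\le \mathcal L_\beta^0-\inf\phi .
\end{equation*}
By coercivity (\Cref{prop:coercivity}), the first bound confines $(\mathbf x_1^t,\mathbf x_2^t,\mathbf y')$ to a bounded set. In particular, $\mathbf x_1^t\in\mathcal S$ and $\mathbf x_2^t$ are bounded. The second bound shows $\mathbf r^t$ is bounded, so $\mathbf y^t = \mathbf y' - \mathbf r^t$ is bounded. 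Finally $\mathbf w^t=\zeta\mathbf y^t$ is bounded as well.

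The main obstacle is Step 1. We must verify the identity $\mathbf w^t=\zeta\mathbf y^t$ at every iteration, including $t=0$, and carry out the exact completion-of-squares. It is this completion-of-squares that makes $\beta>\zeta$, rather than the larger $\beta>\sqrt2\zeta$, the precise threshold for the lower bound. The remaining steps are routine applications of the earlier propositions.
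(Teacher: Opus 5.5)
Your proposal is correct and follows essentially the same route as the paper. Both arguments use $\mathbf w^t=\nabla h(\mathbf y^t)$ to write $\mathcal L_\beta$ as $\phi$ at the feasible point $(\mathbf x_1^t,\mathbf x_2^t,\mathbf x_1^t-\mathbf x_2^t)$ plus $\frac{\beta-\zeta}{2}\|\mathbf r^t\|^2$, and then combine sufficient descent with coercivity; as you rightly make explicit, the descent step needs the algorithm's standing assumption $\beta>\sqrt2\zeta$, which the paper also uses implicitly.

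The differences are only cosmetic and slightly more elementary on your side. You bound $\phi$ below directly from $g,h,\iota_{\mathcal S}\ge 0$ and the strong convexity of $f$, whereas the paper appeals to Rockafellar--Wets Theorem 1.9. You use the exact quadratic expansion of $h$, whereas the paper uses the descent-lemma inequality.
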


\begin{proof}
    We first show that $\inf_{(\mathbf x_1, \mathbf x_2, \mathbf y)\in \mathcal F} \phi(\mathbf x_1, \mathbf x_2, \mathbf y) > -\infty$. According to Theorem 1.9 in \cite{rockafellarVariationalAnalysis1998}, if an extended-real-valued function $f:\mathbb R^m \to \overline{\mathbb R}$ is lower-semicontinuous, level-bounded, and proper, $\inf f$ is finite. In our case, we consider the extended objective function $\Phi(\mathbf x_1, \mathbf x_2, \mathbf y) = \phi(\mathbf x_1, \mathbf x_2, \mathbf y) + \iota_{\mathcal F}(\mathbf x_1, \mathbf x_2, \mathbf y)$. The property of level-boundedness follows from Proposition \ref{prop:coercivity} and it is immediate that $\Phi(\mathbf x_1, \mathbf x_2, \mathbf y)$ is a proper function (i.e., it is not identically $+\infty$ and never takes the value $-\infty$). It remains for us to prove that $\Phi(\mathbf x_1, \mathbf x_2, \mathbf y)$ is lower semicontinuous, which reduces to showing the lower semicontinuity of $\iota_{\mathcal S}(\mathbf x_1)$ and $\iota_{\mathcal F}(\mathbf x_1, \mathbf x_2, \mathbf y)$, since $g(\mathbf x_1, \mathbf x_2), f(\mathbf x_2)$, and $h(\mathbf y)$ are all continuous functions. Recall that a function is lower semicontinuous if the preimage of any open interval $(u, \infty]$ is an open set. For any $u \ge 0$, we know that $\iota_{\mathcal S}^{-1}((u, \infty]) = \mathbb R^n \setminus \mathcal S$ and $\iota_{\mathcal F}^{-1}((u, \infty]) = \mathbb R^{3n} \setminus \mathcal F$ which are both open sets as $\mathcal S$ and $\mathcal F$ are both closed. Whereas for any $u < 0$, $\iota_{\mathcal S}^{-1}((u, \infty]) =\mathbb R^n$ and $\iota_{\mathcal F}^{-1}((u, \infty]) = \mathbb R^{3n}$ which are also both open. Therefore, we have that $\iota_{\mathcal S}(\mathbf x_1)$ and $\iota_{\mathcal F}(\mathbf x_1, \mathbf x_2, \mathbf y)$ are both lower semicontinuous. As a result, we know that $\Phi(\mathbf x_1, \mathbf x_2, \mathbf y)$ is lower semicontinuous and so $\inf_{(\mathbf x_1, \mathbf x_2, \mathbf y)\in \mathcal F} \phi(\mathbf x_1, \mathbf x_2, \mathbf y) > -\infty$.

    We now show the lower boundedness of $\mathcal L_{\beta}(\mathbf x_1^t, \mathbf x_2^t, \mathbf y^t, \mathbf w^t)$.
    \begin{equation}
        \label{eq:bound_Lbeta}
        \begin{aligned}
            \mathcal L_{\beta}(\mathbf x_1^t, \mathbf x_2^t, \mathbf y^t, \mathbf w^t)
             & = \phi(\mathbf x_1^t,  \mathbf x_2^t, \mathbf y^t) + \mathbf w^t{}^\intercal(\mathbf x_1^t - \mathbf x_2^t - \mathbf y^t) + \frac{\beta}{2}\|\mathbf x_1^t - \mathbf x_2^t - \mathbf y^t\|^2                                                                                    \\
             & \ge \phi(\mathbf x_1^t, \mathbf x_2^t, \mathbf x_1^t - \mathbf x_2^t) + h(\mathbf y^t) - h(\mathbf x_1^t - \mathbf x_2^t) + [\nabla h(\mathbf y^t)]^\intercal(\mathbf x_1^t - \mathbf x_2^t - \mathbf y^t)  + \frac{\beta}{2}\|\mathbf x_1^t - \mathbf x_2^t - \mathbf y^t\|^2.
        \end{aligned}
    \end{equation}
    It is straightforward that
    \begin{equation}
        \label{eq:inf_phi}
        \phi(\mathbf x_1^t, \mathbf x_2^t, \mathbf x_1^t - \mathbf x_2^t) \ge \inf_{(\mathbf x_1, \mathbf x_2, \mathbf y) \in\mathcal F} \phi(\mathbf x_1, \mathbf x_2, \mathbf y) > -\infty.
    \end{equation}
    On the other hand, since $\nabla h(y) = \zeta \mathbf y$ has a Lipschitz constant $\zeta$, we have (see e.g., Theorem 2.1.5 in \cite{nesterovLecturesConvexOptimization2018})
    \begin{equation}
        \label{eq:lipschitz_lower}
        h(\mathbf y^t) - h(\mathbf x_1^t - \mathbf x_2^t) + [\nabla h(\mathbf y^t)]^\intercal(\mathbf x_1^t - \mathbf x_2^t - \mathbf y^t) \ge -\frac{\zeta}{2}\|\mathbf x_1^t - \mathbf x_2^t - \mathbf y^t\|^2.
    \end{equation}
    Plugging \Cref{eq:inf_phi} and \Cref{eq:lipschitz_lower} into \Cref{eq:bound_Lbeta}, we have
    \begin{equation}
        \label{eq:lower_bound_lagrangian}
        \begin{aligned}
            \mathcal L_{\beta}(\mathbf x_1^t, \mathbf x_2^t, \mathbf y^t, \mathbf w^t)
             & \ge \inf_{(\mathbf x_1, \mathbf x_2, \mathbf y) \in\mathcal F} \phi(\mathbf x_1, \mathbf x_2, \mathbf y) + \frac{\beta-\zeta}{2}\|\mathbf x_1^t - \mathbf x_2^t - \mathbf y^t\|^2 \\
             & \ge \inf_{(\mathbf x_1, \mathbf x_2, \mathbf y) \in\mathcal F} \phi(\mathbf x_1, \mathbf x_2, \mathbf y) > -\infty,
        \end{aligned}
    \end{equation}
    which confirms the lower boundedness of  $\mathcal L_{\beta}(\mathbf x_1^t, \mathbf x_2^t, \mathbf y^t, \mathbf w^t)$.

    Finally, we verify the boundedness of $\left\{\mathbf{x}_1^t, \mathbf x_2^t, \mathbf y^t, \mathbf w^t\right\}$. According to the sufficient descent property \hyperlink{P2}{P2}, $\mathcal L_{\beta}(\mathbf x_1^t, \mathbf x_2^t,\allowbreak \mathbf y^t, \mathbf w^t)$ is upper bounded by  $\mathcal L_{\beta}(\mathbf x_1^0, \mathbf x_2^0,\allowbreak \mathbf y^0, \mathbf w^0)$ and so are $\phi(\mathbf x_1^t, \mathbf x_2^t, \mathbf x_1^t - \mathbf x_2^t)$ and $\|\mathbf x_1^t - \mathbf x_2^t - \mathbf y^t\|$. It follows from Proposition \ref{prop:coercivity} that $\{\mathbf x_1^t\}$ and $\{\mathbf x_2^t\}$ are both bounded. This, combined with the boundedness of $\|\mathbf x_1^t - \mathbf x_2^t - \mathbf y^t\|$ implies that $\{\mathbf y^t\}$ is bounded. Lastly, since $\mathbf w^t = \nabla h(\mathbf y^t) = \zeta \mathbf y^t$, $\{\mathbf w^t\}$ is also bounded.
\end{proof}

In the following proposition, we address the subgradient bound property \hyperlink{P3}{P3} by providing the following proposition tailored to our context, which also facilitates the determination of the stopping criterion and resource estimation.

\begin{proposition}[cf. Lemma 10 in \cite{wang2019global}]
    \label{prop:subgradient_bound}
    Let $\beta > \zeta > 0$. There exists some $\mathbf d^{+} \in \partial \mathcal L_{\beta}(\mathbf x_1^{+}, \mathbf x_2^{+}, \mathbf y^{+}, \mathbf w^{+})$ such that
    \begin{equation*}
        \|\mathbf d^{+}\| \le \left(\beta + \frac{\zeta}{\beta}\right)\|\mathbf y^+ - \mathbf y^t\|,
    \end{equation*}
\end{proposition}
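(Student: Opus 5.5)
The plan is to build $\mathbf d^+$ one block at a time from the partial subdifferentials of $\mathcal L_\beta$ at $(\mathbf x_1^+,\mathbf x_2^+,\mathbf y^+,\mathbf w^+)$. For each block I use the optimality condition of the corresponding ADMM subproblem to rewrite that component in terms of $\mathbf y^+-\mathbf y^t$. Since $g,f,h$ and the quadratic coupling terms are smooth, the limiting subdifferential splits as
\begin{equation*}
\partial\mathcal L_\beta \;=\; \Bigl(\partial\iota_{\mathcal S}(\mathbf x_1^+)+\nabla_1 g+\mathbf w^++\beta\mathbf r^+\Bigr)\times\Bigl\{\nabla f+\nabla_2 g-\mathbf w^+-\beta\mathbf r^+\Bigr\}\times\Bigl\{\nabla h(\mathbf y^+)-\mathbf w^+-\beta\mathbf r^+\Bigr\}\times\bigl\{\mathbf r^+\bigr\},
\end{equation*}
where $\mathbf r^+:=\mathbf x_1^+-\mathbf x_2^+-\mathbf y^+$ and the smooth gradients are evaluated at $(\mathbf x_1^+,\mathbf x_2^+)$ or $\mathbf y^+$ as appropriate.

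\emph{The $\mathbf x_1$-block.} Here I use the discreteness of $\mathcal S\subset\{0,1\}^n$: every point of $\mathcal S$ is isolated. Hence the (limiting) normal cone, which equals $\partial\iota_{\mathcal S}(\mathbf x_1^+)$, is all of $\mathbb R^n$. I can therefore choose the element of $\partial\iota_{\mathcal S}(\mathbf x_1^+)$ so that the $\mathbf x_1$-component of $\mathbf d^+$ is exactly $\mathbf 0$. This is convenient, because it avoids any need to control $\|\mathbf x_1^+-\mathbf x_1^t\|$, which the bound does not involve.

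\emph{The $\mathbf x_2$-block.} I substitute the optimality condition \Cref{eq:optimal_x2}, $\nabla f(\mathbf x_2^+)+\nabla_2 g(\mathbf x_1^+,\mathbf x_2^+)=\mathbf w^t+\beta(\mathbf x_1^+-\mathbf x_2^+-\mathbf y^t)$. This gives the component $(\mathbf w^t-\mathbf w^+)+\beta(\mathbf y^+-\mathbf y^t)$.

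\emph{The $\mathbf y$-block.} By \Cref{eq:nabla_y_w}, $\nabla h(\mathbf y^+)=\mathbf w^+$, so the component is $-\beta\mathbf r^+=-(\mathbf w^+-\mathbf w^t)$.

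\emph{The $\mathbf w$-block.} The component is $\mathbf r^+=(\mathbf w^+-\mathbf w^t)/\beta$, directly from the dual update.

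The key identity is $\mathbf w^s=\zeta\mathbf y^s$ for every $s$. For $s\ge1$ it follows from \Cref{eq:y_update} together with the dual update, as already noted in the proof of \Cref{prop:suff_descent_2}. For $s=0$ it holds trivially because $\mathbf y^0=\mathbf w^0=\mathbf 0$. Therefore $\mathbf w^+-\mathbf w^t=\zeta(\mathbf y^+-\mathbf y^t)$, and the four components become
\begin{equation*}
\mathbf 0,\qquad (\beta-\zeta)(\mathbf y^+-\mathbf y^t),\qquad -\zeta(\mathbf y^+-\mathbf y^t),\qquad \tfrac{\zeta}{\beta}(\mathbf y^+-\mathbf y^t).
\end{equation*}
Bounding the norm of the stacked vector by the sum of the block norms, and using $\beta>\zeta$ so that $|\beta-\zeta|=\beta-\zeta$, I get
\begin{equation*}
\|\mathbf d^+\|\le(\beta-\zeta+\zeta+\zeta/\beta)\|\mathbf y^+-\mathbf y^t\|=(\beta+\zeta/\beta)\|\mathbf y^+-\mathbf y^t\|,
\end{equation*}
which is the claimed bound.

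The main obstacle is rigor in the $\mathbf x_1$-block. I must confirm that the subdifferential notion used for P3 (the limiting subdifferential, as in \cite{wang2019global}) satisfies the sum rule when a smooth function is added to the indicator. I must also confirm that the normal cone to a finite set at any of its points is the whole space; both follow from standard results in \cite{rockafellarVariationalAnalysis1998}. The rest is algebra driven by the two optimality conditions already derived.
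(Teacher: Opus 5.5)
Your proposal is correct and follows essentially the same route as the paper. Like the paper, you zero out the $\mathbf x_1$-block using $\partial\iota_{\mathcal S}(\mathbf x_1^+)=\mathbb R^n$, rewrite the $\mathbf x_2$-, $\mathbf y$- and $\mathbf w$-blocks through the subproblem optimality conditions and $\mathbf w^+-\mathbf w^t=\zeta(\mathbf y^+-\mathbf y^t)$, and sum the block norms $(\beta-\zeta)+\zeta+\zeta/\beta$; your check that $\mathbf w^0=\zeta\mathbf y^0$ and your remark on the sum rule are small rigor points the paper leaves implicit.
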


\begin{proof}
    Since $\mathcal S$ is a finite discrete set, the subdifferential of the indicator function $\iota_{\mathcal S}$ is trivially $\mathbb R^n$, which means $\partial_{\mathbf x_1} \mathcal L_\beta = \mathbb R^n$. Hence, there exists $\mathbf d^{+} \in \partial \mathcal L_{\beta}(\mathbf x_1^{+}, \mathbf x_2^{+}, \mathbf y^{+}, \mathbf w^{+})$ such that
    \begin{equation}
        \label{eq:subgradient}
        \|\mathbf d^{+}\|
        \le \left\|\frac{\partial \mathcal L_\beta}{\partial \mathbf x_2}(\mathbf x_1^{+}, \mathbf x_2^{+}, \mathbf y^{+}, \mathbf w^{+})\right\|+ \left\|\nabla_{\mathbf y}\mathcal L_\beta (\mathbf x_1^{+}, \mathbf x_2^{+}, \mathbf y^{+}, \mathbf w^{+})\right\| + \left\|\nabla_{\mathbf w}\mathcal L_\beta (\mathbf x_1^{+}, \mathbf x_2^{+}, \mathbf y^{+}, \mathbf w^{+})\right\|
    \end{equation}
    Now we analyze the three terms on the right-hand side. The optimality condition for subproblem of $\mathbf x_2$ implies
    \begin{equation*}
        2\mathbf Q \mathbf x_2^+ - \lambda \boldsymbol{\mu} + \lambda\mathbf \Sigma\mathbf x_2^+ - \mathbf w^t - \beta(\mathbf x_1^+ - \mathbf x_2^+ - \mathbf y^t) = \mathbf 0.
    \end{equation*}
    Therefore, 
    % should be - \beta in the first line
    \begin{equation}
        \label{eq:subgradient_x2}
        \begin{aligned}
            \left\|\frac{\partial \mathcal L_\beta}{\partial \mathbf x_2}(\mathbf x_1^{+}, \mathbf x_2^{+}, \mathbf y^{+}, \mathbf w^{+})\right\|
             & = \|2\mathbf Q \mathbf x_2^+ - \lambda \boldsymbol{\mu} + \lambda\mathbf \Sigma\mathbf x_2^+ - \mathbf w^+ - \beta(\mathbf x_1^+ - \mathbf x_2^+ - \mathbf y^+) \| \\
             & = \|\mathbf w^t - \mathbf w^+ + \beta(\mathbf y^+ - \mathbf y^t)\|                                                                                                 \\
             & \le (\beta - \zeta)\|\mathbf y^+ - \mathbf y^t\|.
        \end{aligned}
    \end{equation}
    On the other hand, from the optimality condition for the subproblem of $\mathbf y$ and $\mathbf w$,
    \begin{equation}
        \begin{aligned}
            \label{eq:subgradient_y}
            \left\|\nabla_{\mathbf y}\mathcal L_\beta (\mathbf x_1^{+}, \mathbf x_2^{+}, \mathbf y^{+}, \mathbf w^{+})\right\| & = \|\zeta \mathbf y^+ - \mathbf w^+ - \beta(\mathbf x_1^+ - \mathbf x_2^+ - \mathbf y^+)\| \\
                                                                                                                               & = \zeta\|\mathbf y^+ - \mathbf y^t\|,
        \end{aligned}
    \end{equation}
    and
    \begin{equation}
        \begin{aligned}
            \label{eq:subgradient_w}
            \left\|\nabla_{\mathbf w}\mathcal L_\beta (\mathbf x_1^{+}, \mathbf x_2^{+}, \mathbf y^{+}, \mathbf w^{+})\right\|
             & = \|\mathbf x_1^+ - \mathbf x_2^+ - \mathbf y^+\|   \\
             & = \frac{\zeta}{\beta}\|\mathbf y^+ - \mathbf y^t\|.
        \end{aligned}
    \end{equation}
    Combining \Cref{eq:subgradient}, \Cref{eq:subgradient_x2}, \Cref{eq:subgradient_y}, and \Cref{eq:subgradient_w}, we immediately know that there exists $\mathbf d^{+} \in \partial \mathcal L_{\beta}(\mathbf x_1^{+}, \mathbf x_2^{+}, \mathbf y^{+}, \mathbf w^{+})$ such that
    \begin{equation*}
        \|\mathbf d^{+}\| \le \left(\beta + \frac{\zeta}{\beta}\right)\|\mathbf y^+ - \mathbf y^t\|.
    \end{equation*}
\end{proof}

By virtue of Proposition \ref{prop:subgradient_bound}, property \hyperlink{P3}{P3} simply holds with $C_2(\beta) = \beta + \frac{\zeta}{\beta}$. Finally, regarding the limiting continuity property \hyperlink{P4}{P4}, we formally establish it in the following proposition:
\begin{proposition}[Limiting continuity, cf. Proof of Theorem 1 in \cite{wang2019global}]
    \label{prop:limiting_continuity}
    The sequence $\left\{\mathbf{x}_1^t, \mathbf x_2^t, \mathbf y^t, \mathbf w^t\right\}$ generated by Algorithm \ref{alg:admm} satisfies the limiting continuity property \hyperlink{P4}{P4}.
\end{proposition}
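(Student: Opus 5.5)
The plan is to split $\mathcal L_\beta$ into a continuous part and the single discontinuous term $\iota_{\mathcal S}(\mathbf x_1)$. Continuity handles the first; the discrete structure of $\mathcal S$ handles the second.

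\emph{Step 1: the iterates stay feasible.} We have $\mathbf x_1^0\in\mathcal S$, and every later $\mathbf x_1^{t}$ is produced by \Cref{eq:quadratic_subproblem_quantum} as an argmin over $\mathcal S$. Hence $\mathbf x_1^t\in\mathcal S$ for all $t$, and $\iota_{\mathcal S}(\mathbf x_1^{t})=0$ along the whole sequence.

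\emph{Step 2: the limit inherits feasibility.} Let $(\mathbf{x}_1^{t_s},\mathbf x_2^{t_s},\mathbf y^{t_s},\mathbf w^{t_s})\to(\mathbf{x}_1^*,\mathbf x_2^*,\mathbf y^*,\mathbf w^*)$. Such subsequences exist by Proposition~\ref{prop:boundedness}, although P4 only needs convergence of the given one. The set $\mathcal S$ is finite, so any two distinct points of it are at distance at least $1$. A convergent sequence in $\mathcal S$ is Cauchy, so it is eventually constant. Therefore $\mathbf x_1^{t_s}=\mathbf x_1^*$ for all large $s$, which gives $\mathbf x_1^*\in\mathcal S$ and $\iota_{\mathcal S}(\mathbf x_1^*)=0=\lim_s\iota_{\mathcal S}(\mathbf x_1^{t_s})$.

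\emph{Step 3: the remaining terms are continuous.} Write
\[
\mathcal L_\beta-\iota_{\mathcal S}(\mathbf x_1)=g(\mathbf x_1,\mathbf x_2)+f(\mathbf x_2)+h(\mathbf y)+\mathbf w^\intercal(\mathbf x_1-\mathbf x_2-\mathbf y)+\tfrac{\beta}{2}\|\mathbf x_1-\mathbf x_2-\mathbf y\|^2 .
\]
Each term is a polynomial in the variables: $g$ is quartic, $f$ and $h$ are quadratic, and the coupling terms are bilinear or quadratic. This function is therefore continuous on $\mathbb R^{4n}$, so its values along the subsequence converge to its value at the limit point. Adding the result of Step 2 gives $\mathcal L_\beta(\mathbf{x}_1^*,\mathbf x_2^*,\mathbf y^*,\mathbf w^*)=\lim_{s\to\infty}\mathcal L_\beta(\mathbf{x}_1^{t_s},\mathbf x_2^{t_s},\mathbf y^{t_s},\mathbf w^{t_s})$. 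This is exactly P4.

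\emph{Main obstacle.} The only real issue is the discontinuous indicator $\iota_{\mathcal S}$. Lower semicontinuity, already shown in the proof of Proposition~\ref{prop:boundedness}, only yields $\le$. Equality requires that the limit does not jump onto a $+\infty$ value, and Steps 1–2 settle this through the eventual constancy of $\mathbf x_1^{t_s}$. This is where the argument is simpler than the general case in \cite{wang2019global}, which instead needs the $\mathbf x_1$-minimality of the iterates.
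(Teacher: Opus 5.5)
Your proof is correct and follows essentially the same argument as the paper. You split off the continuous (polynomial) part of $\mathcal L_\beta$, observe that every $\mathbf x_1^{t}\in\mathcal S$ so the indicator vanishes along the sequence, and use the finiteness of $\mathcal S$ to conclude $\mathbf x_1^*\in\mathcal S$; the only cosmetic difference is that you argue via eventual constancy of $\mathbf x_1^{t_s}$, whereas the paper simply invokes closedness of $\mathcal S$, and either suffices.
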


\begin{proof}
    Our augmented Lagrangian $\mathcal L_\beta(\mathbf{x}_1, \mathbf x_2, \mathbf y, \mathbf w)$ defined as in \Cref{eq:lagrangian} is the sum of continuous functions and an indicator function $\iota_{\mathcal S}(\mathbf x_1)$. Since continuous functions trivially satisfy the limiting continuity property, we need only consider $\iota_{\mathcal S}(\mathbf x_1)$. However, any subsequence $(\mathbf{x}_1^{t_s}, \mathbf x_2^{t_s}, \mathbf y^{t_s}, \mathbf w^{t_s})$ of the iterates generated by Algorithm \ref{alg:admm} guarantees $\mathbf{x}_1^{t_s}\in\mathcal S$, which means $\iota_{\mathcal S}(\mathbf x_1^{t_s}) = 0$ for any $s\in\mathbb N$. On the other hand, $\mathcal S$ is a finite discrete set and hence closed. This means that $\mathbf x_1^{*} = \lim_{s \to \infty} \mathbf x_1^{t_s} \in \mathcal S$, and consequently $\iota_{\mathcal S}(\mathbf x_1^{*}) = 0$. Combining this with the limits of the continuous terms, we have
    \begin{equation*}
        \mathcal{L}_\beta(\mathbf{x}_1^*, \mathbf x_2^*, \mathbf y^*, \mathbf w^*)=\lim _{s \rightarrow \infty} \mathcal{L}_\beta(\mathbf{x}_1^{t_s}, \mathbf x_2^{t_s}, \mathbf y^{t_s}, \mathbf w^{t_s}).
    \end{equation*}
    The proposition is hereby proved.
\end{proof}

In addition to properties \hyperlink{P1}{P1}-\hyperlink{P4}{P4}, the following proposition regarding the (K\L) property of our augmented Lagrangian ensures the sequence generated by Algorithm \ref{alg:admm} converges to a unique limit point:
\begin{proposition}
    \label{prop:KL}
    The augmented Lagrangian \Cref{eq:lagrangian} is a Kurdyka-\L ojasiewicz (K\L) function (see e.g., Definition 2.2 in \cite{liCalculusExponentKurdykaLojasiewicz2018}).
\end{proposition}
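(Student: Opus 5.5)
Our plan is to show that $\mathcal L_\beta$ is semi-algebraic and then invoke the standard fact that every proper lower semicontinuous semi-algebraic function is a K\L{} function (Bolte--Daniilidis--Lewis; see also the discussion around Definition 2.2 in \cite{liCalculusExponentKurdykaLojasiewicz2018}). We will not construct desingularizing functions by hand.

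First, we split $\mathcal L_\beta$ in \Cref{eq:lagrangian} into a smooth part and a nonsmooth part:
\begin{equation*}
\mathcal L_\beta(\mathbf x_1,\mathbf x_2,\mathbf y,\mathbf w) = P(\mathbf x_1,\mathbf x_2,\mathbf y,\mathbf w) + \iota_{\mathcal S}(\mathbf x_1).
\end{equation*}
Here $P = g(\mathbf x_1,\mathbf x_2) + f(\mathbf x_2) + h(\mathbf y) + \mathbf w^\intercal(\mathbf x_1-\mathbf x_2-\mathbf y) + \frac{\beta}{2}\|\mathbf x_1-\mathbf x_2-\mathbf y\|^2$. The function $P$ is a polynomial in the $4n$ real variables with real coefficients. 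Indeed, $g$ is a sum of squares of polynomials of degree two (quartic overall), $f$ and $h$ are quadratic, and the coupling terms are bilinear or quadratic. Hence the graph of $P$ is a real algebraic set, and $P$ is semi-algebraic.

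Second, we treat the indicator. The set $\mathcal S=\{\mathbf x\in\{0,1\}^n:\mathbf x^\intercal\mathbf 1=k\}$ is finite and is cut out by the polynomial equations $x_i^2-x_i=0$ for all $i$ and $\sum_i x_i-k=0$. It is therefore an algebraic, hence semi-algebraic, set. The indicator of a semi-algebraic set is a semi-algebraic extended-real-valued function, because its graph restricted to finite values is $\mathcal S\times\{0\}$ (times the free variables). Finite sums of semi-algebraic functions are semi-algebraic by the Tarski--Seidenberg principle. So $\mathcal L_\beta$ is semi-algebraic on $\mathbb R^{4n}$.

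Third, we verify the regularity hypotheses. Properness is clear: $\mathcal L_\beta$ never equals $-\infty$ and is finite on $\mathcal S\times\mathbb R^{3n}$. Lower semicontinuity follows because $P$ is continuous and $\iota_{\mathcal S}$ is lower semicontinuous, which was already argued in the proof of Proposition~\ref{prop:boundedness} since $\mathcal S$ is closed. Applying the Bolte--Daniilidis--Lewis theorem then gives the K\L{} property at every point of $\operatorname{dom}\partial\mathcal L_\beta$, with a desingularizing function of the form $\varphi(s)=cs^{1-\theta}$, $\theta\in[0,1)$.

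The main obstacle is bookkeeping rather than depth: we must make sure the extended-valued indicator is correctly handled as a semi-algebraic function and that the cited theorem applies to proper lsc functions. As a fallback or sanity check, one can argue directly. Near any point with $\mathbf x_1\in\mathcal S$, the discreteness of $\mathcal S$ means that $\mathcal L_\beta$ locally coincides with the polynomial $P(\bar{\mathbf x}_1,\cdot)$ on the slice $\{\mathbf x_1=\bar{\mathbf x}_1\}$, and equals $+\infty$ off that slice. The \L ojasiewicz gradient inequality for real polynomials then gives the K\L{} inequality, because $\partial_{\mathbf x_1}\mathcal L_\beta$ contributes arbitrary normal directions that only enlarge the subdifferential. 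This local reduction confirms the result independently of the semi-algebraic calculus.
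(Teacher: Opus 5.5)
Your proposal is correct and follows the paper's route: $\mathcal L_\beta$ is a polynomial plus $\iota_{\mathcal S}$ with $\mathcal S$ semialgebraic, hence semialgebraic and therefore K\L{}. The differences are cosmetic: you describe $\mathcal S$ by $x_i^2-x_i=0$, $\sum_i x_i=k$ rather than as a finite union of singletons, and you explicitly check properness and lower semicontinuity, which the Bolte--Daniilidis--Lewis theorem needs and the paper leaves implicit.

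In your optional fallback, write $\mathbf z=(\mathbf x_2,\mathbf y,\mathbf w)$. The reduction works not because the $\mathbf x_1$-block $\mathbb R^n$ ``enlarges'' $\partial\mathcal L_\beta$; enlarging a set can only decrease its distance to $\mathbf 0$, which works against the K\L{} inequality. It works because every element of $\partial\mathcal L_\beta(\bar{\mathbf x}_1,\mathbf z)$ has $\mathbf z$-block exactly $\nabla_{\mathbf z}P(\bar{\mathbf x}_1,\mathbf z)$, so $\operatorname{dist}(\mathbf 0,\partial\mathcal L_\beta(\bar{\mathbf x}_1,\mathbf z))\ge\|\nabla_{\mathbf z}P(\bar{\mathbf x}_1,\mathbf z)\|$.
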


\begin{proof}
    To prove $\mathcal L_{\beta}$ is a K\L \
    function, it suffices to show that it is semialgebraic \cite{bolte2007clarke, attouch2010proximal, gambella2020multiblock}. Since polynomial functions are semialgebraic, the indicator function of a semialgebraic set is semialgebraic, and the finite sum of semialgebraic functions is semialgebraic \cite{attouch2010proximal}, we only need to show that set $\mathcal S$ is semialgebraic, i.e., it can be written as the finite union of sets in the following form
    \begin{equation*}
        \{\mathbf x \in \mathbb R^n: p_i(\mathbf x) = 0,\  q_i(\mathbf x) < 0, i = 1, 2, \dots, p \},
    \end{equation*}
    where $p_i, q_i$ are real polynomial functions. This is true since
    \begin{equation*}
        \mathcal S = \bigcup_{\mathbf x_0 \in \mathcal S} \left\{\mathbf x: p_1(\mathbf x) = \sum_{i=1}^n (x_i - x_{0, i})^2 = 0, \ q_1(\mathbf x) = p_1(\mathbf x) - 1 < 0\right\}
    \end{equation*}
    and there are only a finite number of such $\mathbf x_0$'s. In conclusion, $\mathcal L_{\beta}$ is semialgebraic and consequently a K\L \ function.
\end{proof}

Synthesizing Corollary \ref{coro:sufficient_descent}, Propositions \ref{prop:boundedness} -- \ref{prop:KL}, and also Proposition 2 in \cite{wang2019global}, we have the following theorem:
\begin{thm}
    Starting from any initialization, the sequence generated by Algorithm \ref{alg:admm} converges globally to a unique limit point $\left(\mathbf{x}_1^*, \mathbf x_2^*, \mathbf y^*, \mathbf w^*\right)$, which is a stationary point of $\mathcal{L}_\beta$.
\end{thm}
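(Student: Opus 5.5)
The plan is the standard Kurdyka--\L ojasiewicz convergence argument (Attouch--Bolte--Svaiter, as packaged in Theorem 1 and Proposition 2 of \cite{wang2019global}), fed with the properties established above. Write $\mathbf z^t = (\mathbf x_1^t, \mathbf x_2^t, \mathbf y^t, \mathbf w^t)$ and $L_t = \mathcal L_\beta(\mathbf z^t)$, and assume $\beta > \sqrt{2}\zeta$ so that Corollary~\ref{coro:sufficient_descent} and Propositions~\ref{prop:boundedness} and~\ref{prop:subgradient_bound} all apply.

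\emph{Step 1.} By \hyperlink{P2}{P2}, $L_t$ is nonincreasing, and by \hyperlink{P1}{P1} it is bounded below, so $L_t \to L^*$. Summing P2 over $t$ gives $\sum_t \left(\|\mathbf x_2^{t+1}-\mathbf x_2^t\|^2 + \|\mathbf y^{t+1}-\mathbf y^t\|^2\right) < \infty$. Since $\mathbf w^t = \zeta \mathbf y^t$, the $\mathbf w$-increments are controlled by the $\mathbf y$-increments. For $\mathbf x_1$, I use $\mathbf x_1^{t+1} = \mathbf x_2^{t+1} + \mathbf y^{t+1} + (\mathbf w^{t+1}-\mathbf w^t)/\beta$, which bounds $\|\mathbf x_1^{t+1}-\mathbf x_1^t\|$ by a constant times the sum of the last two increments of $\mathbf x_2$ and $\mathbf y$. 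Hence all increments of $\mathbf z^t$ tend to zero.

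\emph{Step 2.} By P1, the set $\Omega$ of limit points is nonempty and compact. By P4, $\mathcal L_\beta \equiv L^*$ on $\Omega$. By P3 together with the closedness of the limiting subdifferential, every point of $\Omega$ is stationary: along a convergent subsequence, $\mathbf d^{t_s} \to 0$ and $\mathcal L_\beta(\mathbf z^{t_s}) \to \mathcal L_\beta(\mathbf z^*)$, so $0 \in \partial \mathcal L_\beta(\mathbf z^*)$.

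\emph{Step 3 (the main step).} Proposition~\ref{prop:KL} gives a uniform K\L{} inequality on a neighborhood of the compact set $\Omega$. This yields $\varphi'(L_t - L^*)\,\mathrm{dist}(0,\partial\mathcal L_\beta(\mathbf z^t)) \ge 1$ for large $t$, with a concave desingularizing function $\varphi$. Combining this with P2, P3 and concavity of $\varphi$ gives the usual estimate
\[
2\|\mathbf u^{t+1}-\mathbf u^t\| \le \|\mathbf u^t-\mathbf u^{t-1}\| + \frac{C_2}{C_1}\left(\varphi(L_t-L^*)-\varphi(L_{t+1}-L^*)\right),
\]
where $\mathbf u = (\mathbf x_2, \mathbf y)$. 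Summing shows that the $\mathbf u$-increments have finite length. By the Step 1 relations, the $\mathbf x_1$- and $\mathbf w$-increments are then summable too. So $\{\mathbf z^t\}$ is Cauchy, converges to a single point, and that point is stationary by Step 2.

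The main obstacle is a mismatch between P2/P3 and the variable $\mathbf x_1$: both only control the $(\mathbf x_2, \mathbf y)$ increments, not $\mathbf x_1$. My first attempt at closing this gap is the linear identity above, which expresses the $\mathbf x_1$-increment through the others. A second, independent route exists because $\mathbf x_1^t \in \mathcal S$ is a finite set. Once $\|\mathbf x_1^{t+1}-\mathbf x_1^t\|\to 0$, and distinct points of $\mathcal S$ are at distance at least $\sqrt2$, the sequence $\mathbf x_1^t$ is eventually constant. Either argument completes the finite-length conclusion.
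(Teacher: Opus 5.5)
Your proposal is correct and follows the same route as the paper. The paper verifies P1--P4 and the K\L{} property and then simply cites Proposition 2 of \cite{wang2019global}; you instead unpack that citation into the standard K\L{} finite-length argument, and your explicit handling of the $\mathbf x_1$ block (via the dual-update identity, or via finiteness of $\mathcal S$) supplies a detail the paper leaves implicit.
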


Before proceeding to resource estimation, we elaborate on the stopping criterion and the selection of $\zeta$ for the ADMM algorithm. For the former, the condition on $\|\mathbf y^+ - \mathbf y^{t}\|$ monitors the dual feasibility, which entails the primal feasibility. Indeed, for the primal residual, we have
\begin{equation*}
    \begin{aligned}
        \|\mathbf x_1^+ -\mathbf x_2^+ - \mathbf y^+ \| & =  \frac{1}{\beta}\|\mathbf w^+ -  \mathbf w^{t}\|    \\
                                                        & = \frac{\zeta}{\beta}\|\mathbf y^+ - \mathbf y^{t}\|.
    \end{aligned}
\end{equation*}

Thus far, the selection of $\zeta$ has been left unspecified. However, a prudent choice of this parameter (and hence $\beta$) is essential for the practical success of the risk-parity model. If $\zeta$ is too small, the algorithm may terminate when the consistency error $\|\mathbf x_1^T - \mathbf x_2^T\|$ is too large to constitute a valid solution for the original optimization problem. In light of this, we provide an analysis of the scaling of $\zeta$ required to ensure validity at termination in following proposition:

\begin{proposition}
    \label{prop:zeta}
    There exists some constant $C_3(\lambda, \epsilon, c_1, c_2) > 0$ such that if
    \begin{equation*}
        \zeta > C_3(\lambda, \epsilon, c_1, c_2)\dfrac{n^3\sqrt{k}}{\delta},
    \end{equation*}
    then the consistency error upon the successful termination of Algorithm \ref{alg:admm} satisfies $\|\mathbf x_1^T - \mathbf x_2^T\| < \epsilon + \delta$.
\end{proposition}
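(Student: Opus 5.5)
Since $\mathbf x_1^T - \mathbf x_2^T = \mathbf y^T + (\mathbf x_1^T - \mathbf x_2^T - \mathbf y^T)$, the triangle inequality splits the consistency error into two pieces, the primal residual and $\|\mathbf y^T\|$, and I plan to bound each separately. At successful termination we have $\Delta = \|\mathbf y^T - \mathbf y^{T-1}\| < \epsilon/(\beta+1)$. The identity derived just before the proposition gives
\begin{equation*}
\|\mathbf x_1^T - \mathbf x_2^T - \mathbf y^T\| = \frac{\zeta}{\beta}\|\mathbf y^T - \mathbf y^{T-1}\| < \frac{\zeta}{\beta}\cdot\frac{\epsilon}{\beta+1} < \epsilon,
\end{equation*}
using $\beta > \zeta$. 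So the primal residual costs at most $\epsilon$, and it remains to show $\|\mathbf y^T\| < \delta$ once $\zeta$ is large enough.

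To control $\|\mathbf y^T\|$ I would use the monotonicity from Corollary~\ref{coro:sufficient_descent} together with the lower bound in \Cref{eq:lower_bound_lagrangian}. The idea is to refine the lower bound on $\mathcal L_\beta$ so that it retains a term proportional to $\zeta\|\mathbf y^t\|^2$. Since $\mathbf w^t = \zeta \mathbf y^t$, we have
\begin{equation*}
\mathcal L_\beta^t = g + f(\mathbf x_2^t) + \tfrac{\zeta}{2}\|\mathbf y^t\|^2 + \zeta \mathbf y^{t\intercal}\mathbf r^t + \tfrac{\beta}{2}\|\mathbf r^t\|^2, \qquad \mathbf r^t := \mathbf x_1^t - \mathbf x_2^t - \mathbf y^t.
\end{equation*}
Completing the square with $\beta > \sqrt 2\zeta$ bounds the cross term by a fraction of $\tfrac{\zeta}{2}\|\mathbf y^t\|^2 + \tfrac{\beta}{2}\|\mathbf r^t\|^2$. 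This leaves $\mathcal L_\beta^t \ge c\,\zeta\|\mathbf y^t\|^2 + g + f(\mathbf x_2^t)$ with an absolute constant $c>0$.

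Two further estimates complete the chain. First, $g \ge 0$ and $f(\mathbf x_2) \ge -\lambda\|\boldsymbol\mu\|^2/(2\gamma_{\min}) \ge -\lambda c_1^2 n/(2c_3)$ give a lower bound on $\mathcal L_\beta^t$ that does not depend on $\zeta$. Second, the initialization $\mathbf y^0 = \mathbf w^0 = \mathbf 0$, $\mathbf x_2^0 = \mathbf x_1^0 \in \mathcal S$ gives $\mathcal L_\beta^0 = g(\mathbf x_1^0,\mathbf x_1^0) + f(\mathbf x_1^0)$, which is also independent of $\zeta$. Chaining these bounds yields
\begin{equation*}
\|\mathbf y^T\|^2 \le \frac{\mathcal L_\beta^0 + \lambda c_1^2 n/(2c_3)}{c\,\zeta}.
\end{equation*}

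The main obstacle is bounding $\mathcal L_\beta^0$ polynomially in $n$ and $k$ with the right exponents. We have $(\boldsymbol\Sigma\mathbf x_1^0)_i \le \sqrt{k}\,\|\boldsymbol\Sigma\|_2 \le \sqrt k\, c_2 n$, since $\gamma_{\max} \le c_2 n$. The sum defining $g$ has $n(n-1)$ terms, each at most $4kc_2^2n^2$, so $g(\mathbf x_1^0,\mathbf x_1^0) = O(c_2^2 k n^4)$. A sharper count, using that only pairs with $i$ or $j$ in the support contribute, gives $O(k^2 n^3)$. Also $f(\mathbf x_1^0) \le \lambda(c_1 k + c_2 k n)$. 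Requiring the resulting bound on $\|\mathbf y^T\|$ to be below $\delta$ forces a condition of the form $\zeta \gtrsim \mathrm{poly}(n,k)/\delta^2$.

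This exposes a mismatch: a purely energy-based argument gives $1/\delta^2$ dependence, while the statement has $n^3\sqrt k/\delta$. To obtain linear dependence on $\delta$ I would instead bound $\|\mathbf y^T\| = \|\mathbf w^T\|/\zeta$ directly. Using the $\mathbf x_2$-optimality condition \Cref{eq:optimal_x2}, one can write $\mathbf w^T$ in terms of $\nabla f(\mathbf x_2^T) + \nabla_2 g(\mathbf x_1^T,\mathbf x_2^T)$ plus an $O(\beta\Delta)$ error. Here $\mathbf x_2^T$ stays bounded by $O(\sqrt k)$, being close to $\mathbf x_1^T \in \mathcal S$ up to $\|\mathbf y^T\|$ and the primal residual. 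Then $\|\nabla_2 g\| = \|2\mathbf Q\mathbf x_2\| \le 2\|\boldsymbol\Sigma\|^2\|\mathbf H\|\,\|\mathbf x_2\| = O(n^2\cdot n\cdot\sqrt k) = O(n^3\sqrt k)$. This gives $\|\mathbf w^T\| \le C_3 n^3\sqrt k$ (absorbing $\lambda c_1$, $c_2$ and the $\epsilon$-dependent residual terms into $C_3$), and hence $\|\mathbf y^T\| < \delta$ whenever $\zeta > C_3 n^3\sqrt k/\delta$. This matches the statement. The delicate point is closing the self-referential bound on $\|\mathbf x_2^T\|$, for which the crude energy bound above should suffice.
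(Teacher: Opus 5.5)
Your final route is the paper's: split $\|\mathbf x_1^T-\mathbf x_2^T\|$ into the primal residual (at most $\epsilon$) plus $\|\mathbf y^T\|=\|\mathbf w^T\|/\zeta$. Then bound $\|\mathbf w^T\|$ through the $\mathbf x_2$-optimality condition, using $\gamma_{\max}(\mathbf Q)\le 2n\gamma_{\max}^2(\boldsymbol\Sigma)=O(n^3)$ and $\beta\Delta<\epsilon$. The gap is at the step you flagged as delicate: closing the bound on $\|\mathbf x_2^T\|$ with the energy estimate does not give $O(\sqrt k)$. With your bound $\mathcal L_\beta^0=O(k^2n^3)$ and $\zeta>C_3n^3\sqrt k/\delta$, the energy inequality only gives $\|\mathbf y^T\|=O\bigl(k^{3/4}\sqrt{\delta/C_3}\bigr)$. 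With the $O(kn^4)$ count it gives $O\bigl(n^{1/2}k^{1/4}\sqrt{\delta/C_3}\bigr)$. So $\|\mathbf x_2^T\|\le\sqrt k+\epsilon+\|\mathbf y^T\|$ is not $O(\sqrt k)$ uniformly in $n,k$. Feeding this into the $\mathbf w^T$ bound in one shot yields a condition whose $C_3$ must grow with $k$ (resp.\ $n$), which is worse than the stated $n^3\sqrt k/\delta$ scaling.

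The missing idea is that no a priori bound on $\|\mathbf x_2^T\|$ is needed, because the estimate is linear in $e:=\|\mathbf x_1^T-\mathbf x_2^T\|$. Substituting $\|\mathbf x_2^T\|\le\sqrt k+e$ gives $e\le\epsilon+\zeta^{-1}\bigl[a(\sqrt k+e)+b\bigr]$. Here $a=2\gamma_{\max}(\mathbf Q)+\lambda\gamma_{\max}(\boldsymbol\Sigma)=O(c_2^2n^3+\lambda c_2n)$ and $b=\lambda c_1\sqrt n+\epsilon$. Once $\zeta>a$, you move $ae/\zeta$ to the left and obtain $(\zeta-a)e\le\epsilon\zeta+a\sqrt k+b$. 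Then $e<\epsilon+\delta$ holds as soon as $\zeta>\bigl[a(\sqrt k+\epsilon+\delta)+b\bigr]/\delta$, which has the form $C_3n^3\sqrt k/\delta$ for bounded $\delta$. This absorption is exactly how the paper concludes.

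Your energy route can be rescued with a sharper initial estimate. Since $|\Sigma_{ij}|\le c_2$, you have $|(\boldsymbol\Sigma\mathbf x_1^0)_i|\le c_2k$, hence $g(\mathbf x_1^0,\mathbf x_1^0)\le 4c_2^2nk^3$, which gives $\|\mathbf y^T\|=O(k^{1/4}\sqrt\delta)$. However, this brings $c_3$ (through the lower bound on $f$) into the constant, which the statement's $C_3(\lambda,\epsilon,c_1,c_2)$ does not allow, and it is unnecessary given the absorption step.
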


\begin{proof}
    At the termination, we know that
    \begin{equation}
        \label{eq:consistency_error_1}
        \begin{aligned}
            \|\mathbf x_1^T - \mathbf x_2^T\|
             & \le \|\mathbf x_1^T - \mathbf x_2^T - \mathbf y^T\| + \|\mathbf y^T\|                             \\
             & = \frac{\zeta}{\beta}\|\mathbf y^T - \mathbf y^{T-1}\| + \frac{1}{\zeta}\|\nabla h(\mathbf y^T)\| \\
             & \le \epsilon + \frac{1}{\zeta}\|\mathbf w^T\|
        \end{aligned}
    \end{equation}

    On the other hand, according to the optimality condition for the $\mathbf x_2$ subproblem at the $T$-th iteration of Algorithm \ref{alg:admm},
    \begin{equation*}
        2\mathbf Q\mathbf x_2^T - \lambda\boldsymbol\mu + \lambda\mathbf \Sigma \mathbf x_2^T - \mathbf w^{T-1} - \beta(\mathbf x_1^T - \mathbf x_2^T - \mathbf y^{T-1}) = \mathbf 0,
    \end{equation*}\
    which implies
    \begin{equation}
        \label{eq:w^T_norm}
        \begin{aligned}
            \|\mathbf w^T\|
             & \le \|2\mathbf Q\mathbf x_2^T\| + \lambda\|\boldsymbol\mu\| + \lambda\|\mathbf \Sigma \mathbf x_2^T\| + \beta\|\mathbf y^T - \mathbf y^{T-1}\|                     \\
             & \le 2 \gamma_{\max}(\mathbf Q)\|\mathbf x_2^T\| + \lambda\sqrt{n}\|\boldsymbol \mu\|_{\infty} +  \lambda\gamma_{\max}(\mathbf \Sigma) \|\mathbf x_2^T\| + \epsilon
        \end{aligned}
    \end{equation}
    It follows from the data boundedness assumption that $\|\boldsymbol \mu\|_{\infty} < c_1$, $\gamma_{\max}(\mathbf \Sigma) < c_2 n$. For $\gamma_{\max}(\mathbf Q)$, we know that
    \begin{equation*}
        \begin{aligned}
            \gamma_{\max}(\mathbf Q)
             & = \gamma_{\max}(\boldsymbol\Sigma^\intercal \mathbf H \boldsymbol\Sigma)                                      \\
             & = \max_{\|\mathbf x\|=1} \mathbf x \boldsymbol{\Sigma}^\intercal \mathbf H\boldsymbol{\Sigma}\mathbf x        \\
             & \le \gamma_{\max}(\mathbf H)\cdot \max_{\|\mathbf x\| = 1} \mathbf x^\intercal \boldsymbol{\Sigma}^2\mathbf x \\
             & = \gamma_{\max}(\mathbf H)\gamma_{\max}^2\left(\boldsymbol{\Sigma}\right).
        \end{aligned}
    \end{equation*}
    Let $\mathbf D$ be a diagonal matrix with each diagonal term $D_{ii} = 2nx_{1, i}^+{}^2, i = 1, 2, \dots, n$, then $\mathbf H = \mathbf D - 2\mathbf x_1^+\mathbf x_1^+{}^\intercal$. Since for each $i = 1, 2, \dots, n$, $0\leq x_{1, i}^+ \leq 1 $, we have
    \begin{equation*}
        \begin{aligned}
            \gamma_{\max}(\mathbf H)
             & \le \gamma_{\max}(\mathbf D)                                          \\
             & \le \max_{\|\mathbf x\| = 1}\  \mathbf x^\intercal \mathbf D\mathbf x \\
             & = \max_{\|\mathbf x\| = 1}\   2n\sum_{i=1}^n x_{1, i}^+{}^2 x_i^2     \\
             & \le 2n \sum_{i=1}^n x_i^2 = 2n.
        \end{aligned}
    \end{equation*}
    Hence, we have that $\gamma_{\max}(\mathbf Q) \le 2n\gamma^2_{\max}\left(\boldsymbol{\Sigma}\right) = 2c_2^2n^3$. Additionally, we know that $\|\mathbf x_2^T\| \le \|\mathbf x_1^T - \mathbf x_2^T\| + \|\mathbf x_1^T\| =  \|\mathbf x_1^T - \mathbf x_2^T\| + \sqrt{k}$. Consequently, \Cref{eq:w^T_norm} leads to
    \begin{equation}
        \label{eq:w^T_norm_2}
        \|\mathbf w^T\| \le 2c_2^2n^3(\sqrt{k} +  \|\mathbf x_1^T - \mathbf x_2^T\|) + c_1\lambda\sqrt{n} + c_2\lambda n  (\sqrt{k} +  \|\mathbf x_1^T - \mathbf x_2^T\|) + \epsilon
    \end{equation}
    Combining \Cref{eq:consistency_error_1} and \Cref{eq:w^T_norm_2}, we have
    \begin{equation}
        \label{eq:consistency_error_2}
        (\zeta - 2c_2^2n^3 - c_2\lambda n ) \|\mathbf x_1^T - \mathbf x_2^T\| \le \epsilon\zeta + 2c_2^2n^3\sqrt{k} + c_1\lambda\sqrt{n} + c_2\lambda n \sqrt{k} + \epsilon.
    \end{equation}
    Let us consider the case where $\zeta > 2c_2^2n^3 + c_2\lambda n$. $\|\mathbf x_1^T - \mathbf x_2^T\| < \epsilon + \delta$ will be guaranteed if
    \begin{equation*}
        \frac{\epsilon\zeta + 2c_2^2n^3\sqrt{k} + c_1\lambda\sqrt{n} + c_2\lambda n \sqrt{k} + \epsilon}{\zeta - 2c_2^2n^3 - c_2\lambda n} < \epsilon + \delta,
    \end{equation*}
    which simplifies to the condition
    \begin{equation*}
        \zeta > \frac{2c_2^2n^3(\sqrt{k} + \epsilon + \delta) + c_1\lambda\sqrt{n} + c_2\lambda n(\sqrt{k} + \epsilon + \delta) + \epsilon}{\delta}.
    \end{equation*}
    This means that there exists some constant $C_3(\lambda, \epsilon, c_1, c_2) > 0$ such that when $\zeta > C_3(\lambda, \epsilon, c_1, c_2)\dfrac{n^3\sqrt{k}}{\delta}$, $\|\mathbf x_1^T - \mathbf x_2^T\| < \epsilon + \delta$.
\end{proof}

Based on the stopping criterion and the proper selection of $\zeta$ for our hybrid ADMM scheme, we can now evaluate the iteration complexity, which places the final piece of the puzzle in our analysis of resource usage. The result is summarized in the following theorem:
\begin{thm}\label{thm:iteration_complexity}
    For any given $0 < \epsilon, \delta < 1$, let $\zeta = \dfrac{c_4n^3\sqrt{k}}{\delta}$ for some $c_4 > C_3(\lambda,\epsilon, c_1, c_2)$ and $\beta = c_5\zeta$ for some $c_5 > \sqrt{2}$. Algorithm \ref{alg:admm} is guaranteed to converge in at most $T = \mathcal{O}\left(\dfrac{n^{6}k^{3/2}}{\epsilon^2\delta}\right)$ iterations, yielding a solution where there exists $\mathbf d^T \in \partial \mathcal L_\beta(\mathbf x_1^T, \mathbf x_2^T, \mathbf y^T, \mathbf w^T)$ such that $\|\mathbf d^T\| < \epsilon$ with the consistency error $\|\mathbf x_1^T - \mathbf x_2^T\| < \delta + \epsilon$.
\end{thm}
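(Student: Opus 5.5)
The plan is a standard telescoping argument: the sufficient descent property \hyperlink{P2}{P2} (Corollary \ref{coro:sufficient_descent}) bounds the running sum of $\|\mathbf y^{t+1}-\mathbf y^t\|^2$. That forces $\|\mathbf y^{t+1}-\mathbf y^t\|$ below the stopping threshold $\epsilon/(\beta+1)$ within the stated number of iterations. Proposition \ref{prop:subgradient_bound} and Proposition \ref{prop:zeta} then give the two conclusions.

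\textbf{Step 1 (telescoping).} With $\beta=c_5\zeta$ and $c_5>\sqrt2$, Corollary \ref{coro:sufficient_descent} gives
\[
C_1(\beta)=\frac{\beta}{2}-\frac{\zeta^2}{\beta}=\zeta\Big(\frac{c_5}{2}-\frac{1}{c_5}\Big)=\Theta(\zeta).
\]
Summing \hyperlink{P2}{P2} over $t=0,\dots,T-1$ yields
\[
C_1(\beta)\sum_{t=0}^{T-1}\|\mathbf y^{t+1}-\mathbf y^t\|^2\le \mathcal L_\beta^0-\inf_t \mathcal L_\beta^t .
\]
Hence $\min_{t<T}\|\mathbf y^{t+1}-\mathbf y^t\|^2\le (\mathcal L_\beta^0-\underline{\mathcal L})/(C_1 T)$. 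The algorithm stops once some $\Delta<\epsilon/(\beta+1)$, so it suffices to take
\[
T > \frac{(\beta+1)^2(\mathcal L_\beta^0-\underline{\mathcal L})}{C_1\epsilon^2}.
\]

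\textbf{Step 2 (the Lagrangian gap).} This is the main obstacle: the gap must be bounded by a quantity that does not grow with $\zeta$. The initialization is $\mathbf x_2^0=\mathbf x_1^0\in\mathcal S$ and $\mathbf y^0=\mathbf w^0=\mathbf 0$. Therefore
\[
\mathcal L_\beta^0=g(\mathbf x_1^0,\mathbf x_1^0)+f(\mathbf x_1^0),
\]
which is independent of $\zeta$ and $\beta$. Using data boundedness, $g(\mathbf x_1^0,\mathbf x_1^0)$ is a sum of $n^2$ terms, each $O((c_2 n)^2 k)$, and $f(\mathbf x_1^0)=O(\lambda(c_1k+c_2nk))$. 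For the lower bound, Proposition \ref{prop:boundedness} (valid since $\beta>\zeta$) gives $\mathcal L_\beta^t\ge \inf_{\mathcal F}\phi$. Since $g,h\ge0$, we have $\inf_{\mathcal F}\phi\ge \inf f\ge -\lambda\|\boldsymbol\mu\|^2/(2c_3)\ge -\lambda c_1^2 n/(2c_3)$, again independent of $\zeta$. The resulting gap is a polynomial $G(n,k)$ in $n$ and $k$. I would then check that the stated rate $T=\mathcal O(n^6k^{3/2}/(\epsilon^2\delta))$ comes out when $G$ is treated as constant or absorbed into the constants. Indeed $(\beta+1)^2/C_1=\Theta(\zeta)=\Theta(n^3\sqrt k/\delta)$. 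If the intended count instead arises from $\beta^2$ together with an $\mathcal O(1)$ gap, the resulting exponent is $n^6k/\delta^2$. I would reconcile the bookkeeping carefully, most likely by bounding the gap and reporting $\mathcal O(\zeta\cdot \text{gap}/\epsilon^2)$ with the stated polynomial. This reconciliation is the delicate point.

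\textbf{Step 3 (conclusions at termination).} At the stopping index $T$ we have $\|\mathbf y^T-\mathbf y^{T-1}\|<\epsilon/(\beta+1)$. Proposition \ref{prop:subgradient_bound} supplies $\mathbf d^T\in\partial\mathcal L_\beta(\mathbf x_1^T,\mathbf x_2^T,\mathbf y^T,\mathbf w^T)$ with
\[
\|\mathbf d^T\|\le (\beta+\zeta/\beta)\|\mathbf y^T-\mathbf y^{T-1}\|.
\]
Since $\zeta/\beta<1$, this gives $\|\mathbf d^T\|<\epsilon$. Finally, $\zeta=c_4n^3\sqrt k/\delta$ with $c_4>C_3$ satisfies the hypothesis of Proposition \ref{prop:zeta}, whose proof uses exactly the termination inequality $\tfrac{\zeta}{\beta}\|\mathbf y^T-\mathbf y^{T-1}\|\le\epsilon$. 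That proposition yields $\|\mathbf x_1^T-\mathbf x_2^T\|<\epsilon+\delta$.
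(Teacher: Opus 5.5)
Your architecture is the paper's: telescope the sufficient-descent inequality between $\mathcal L_\beta^0=g(\mathbf x_1^0,\mathbf x_1^0)+f(\mathbf x_1^0)$ and the lower bound $-\lambda c_1^2n/(2c_3)$, note $(\beta+1)^2/C_1(\beta)=\Theta(\zeta)$, then invoke Propositions \ref{prop:subgradient_bound} and \ref{prop:zeta}. Your min-over-iterates reading of the stopping rule is also the right way to justify the paper's bound on $\|\mathbf y^T-\mathbf y^{T-1}\|$.

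The gap is that the content of the theorem is the rate, and you never establish it. Since $T=\mathcal O(\zeta G/\epsilon^2)$ with $\zeta=\Theta(n^3\sqrt k/\delta)$, the claim $T=\mathcal O(n^6k^{3/2}/(\epsilon^2\delta))$ requires a Lagrangian gap $G=\mathcal O(n^3k)$. Your estimate of $g(\mathbf x_1^0,\mathbf x_1^0)$ as $n^2$ terms each of size $\mathcal O(c_2^2n^2k)$ gives only $G=\mathcal O(n^4k)$. That yields $T=\mathcal O(n^7k^{3/2}/(\epsilon^2\delta))$, a factor $n$ short.

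Your proposed fallbacks do not repair this. $G$ grows with $n$ and $k$, so it can be neither ``treated as constant'' nor absorbed into constants. The ``$\beta^2$ with an $\mathcal O(1)$ gap'' route is not what your own telescoping produces, which you correctly computed as $\Theta(\zeta)$.

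The missing ingredient is a sharper bound on $g$ at a feasible point. The paper writes $g(\mathbf x_1^0,\mathbf x_2)=\mathbf x_2^\intercal\mathbf Q^0\mathbf x_2$ with $\mathbf Q^0=\boldsymbol\Sigma\mathbf H\boldsymbol\Sigma$. It reuses the estimate $\gamma_{\max}(\mathbf Q)\le 2n\gamma_{\max}^2(\boldsymbol\Sigma)\le 2c_2^2n^3$ from the proof of Proposition \ref{prop:zeta}. Hence $g(\mathbf x_1^0,\mathbf x_1^0)\le 2c_2^2n^3\|\mathbf x_1^0\|^2=2c_2^2n^3k$, and the gap is at most $2c_2^2n^3k+\tfrac{c_2\lambda nk}{2}+c_1\lambda k+\tfrac{\lambda c_1^2n}{2c_3}=\mathcal O(n^3k)$.

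An entrywise count also works. Positive semidefiniteness gives $|\Sigma_{ij}|\le\sqrt{\Sigma_{ii}\Sigma_{jj}}\le c_2$, so $|(\boldsymbol\Sigma\mathbf x_1^0)_i|\le c_2k$. Only the at most $2nk$ ordered pairs with $x^0_{1,i}=1$ or $x^0_{1,j}=1$ contribute, so $g(\mathbf x_1^0,\mathbf x_1^0)\le 8c_2^2nk^3\le 8c_2^2n^3k$.

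Plugging $G=\mathcal O(n^3k)$ into your Step 1 condition $T>(\beta+1)^2G/(C_1\epsilon^2)$ gives $\mathcal O(\zeta n^3k/\epsilon^2)=\mathcal O(n^6k^{3/2}/(\epsilon^2\delta))$. With that, the rest of your argument, including Step 3, goes through as written.
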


\begin{proof}
    We start by analyzing the lower and upper bounds of $\mathcal L_\beta$. According to Proposition \ref{prop:boundedness} and, in particular, \Cref{eq:lower_bound_lagrangian}, we have that
    \begin{equation}
        \label{eq:L_lower_bound}
        \begin{aligned}
            \min_{0 \le t \le T} \mathcal L_\beta(\mathbf x_1^t, \mathbf x_2^t, \mathbf y^t, \mathbf w^t)
             & \ge \inf_{(\mathbf x_1, \mathbf x_2, \mathbf y) \in\mathcal F} \phi(\mathbf x_1, \mathbf x_2, \mathbf y) \\
             & \ge \min_{\mathbf x_2 \in \mathbb R^n} f(\mathbf x_2)                                                                                     \\
             & = -\frac{\lambda}{2}\boldsymbol\mu^\intercal \mathbf \Sigma^{-1}\boldsymbol\mu                             \\
             & \ge -\frac{\lambda \|\boldsymbol\mu\|^2}{2\gamma_{\min}(\mathbf \Sigma)} \ge -\frac{\lambda c_1^2 n }{2c_3}
        \end{aligned}
    \end{equation}
    On the other hand,
    \begin{equation}
        \label{eq:L_upper_bound}
        \begin{aligned}
            \max_{0 \le t \le T} \mathcal L_\beta(\mathbf x_1^t, \mathbf x_2^t, \mathbf y^t, \mathbf w^t)
             & = L_\beta(\mathbf x_1^0, \mathbf x_2^0, \mathbf y^0, \mathbf w^0)                                                                                                                                                           \\
             & \le \max_{\mathbf x_2 \in \mathcal S } \left[\mathbf x_2^\intercal \mathbf Q^0\mathbf x_2  +\lambda \left(-\mathbf x_2^\intercal \boldsymbol \mu + \frac{1}{2} \mathbf x_2^\intercal \mathbf \Sigma \mathbf x_2\right)\right] \\
             & \le \gamma_{\max}(\mathbf Q^0)\|\mathbf x_2\|^2 + \lambda\left(k\|\boldsymbol \mu\|_\infty + \frac{1}{2}\gamma_{\max}(\mathbf \Sigma)\|\mathbf  x_2\|^2\right)                                                                \\
             & \le 2nk\gamma_{\max}^2(\mathbf \Sigma) + \frac{\lambda k }{2} \gamma_{\max}(\mathbf \Sigma) + \lambda k \|\boldsymbol \mu\|_\infty                                                                                          \\
             & \le 2c_2^2n^3k + \frac{c_2\lambda nk }{2} + c_1\lambda k,
        \end{aligned}
    \end{equation}
    where $\mathbf Q^0$ is the matrix $\mathbf Q$ induced by the initialization $\mathbf x_1^0$.

    Suppose that Algorithm \ref{alg:admm} terminates successfully after $T < T_{\max}$ iterations, i.e., $\|\mathbf y^T - \mathbf y^{T-1}\|  < \epsilon / (\beta + 1)$. Consequent to Proposition \ref{prop:subgradient_bound}, $\|\mathbf d^T\| < \epsilon$. On the other hand, in view of Proposition \ref{prop:zeta},  $\|\mathbf x_1^T - \mathbf x_2^T\| < \delta + \epsilon$ is guaranteed by the selection of $\zeta$. Furthermore, according to Corollary \ref{coro:sufficient_descent}, \Cref{eq:L_lower_bound}, and \Cref{eq:L_upper_bound},

    \begin{equation}\label{eq:iteration_bound_2}
        \begin{aligned}
            \left(\frac{\beta}{2} - \frac{\zeta^2}{\beta}\right)\sum_{t=1}^T ( \|\mathbf x_2^{t} - \mathbf x_2^{t-1}\|^2 + \|\mathbf y^t - \mathbf y^{t-1}\|^2 )
             & \le \max_{0 \le t \le T} \mathcal L_\beta(\mathbf x_1^t, \mathbf x_2^t, \mathbf y^t, \mathbf w^t) - \min_{0 \le t \le T} \mathcal L_\beta(\mathbf x_1^t, \mathbf x_2^t, \mathbf y^t, \mathbf w^t) \\
             & \le 2c_2^2n^3k + \frac{c_2\lambda nk }{2} + c_1\lambda k + \frac{\lambda c_1^2 n }{2c_3},
        \end{aligned}
    \end{equation}
    which means there exists some constant $c_6 > 0$ such that
    \begin{equation}
        \label{eq:termination_1}
        \|\mathbf y^T - \mathbf y^{T-1}\|  \le \sqrt{\frac{2c_5c_6}{(c_5^2-2)\zeta T}}n^{3/2}\sqrt{k}.
    \end{equation}
    On the other hand, Algorithm \ref{alg:admm} will terminate when
    \begin{equation}
        \label{eq:termination_2}
        \begin{aligned}
            \|\mathbf y^T - \mathbf y^{T-1}\|
             &
            < \frac{\epsilon}{\beta + 1}  \\
             & = \frac{\epsilon}{c_5\zeta + 1}.
        \end{aligned}
    \end{equation}

    Combining \Cref{eq:termination_1} and \Cref{eq:termination_2}, termination is guaranteed if
    \begin{equation*}
        \sqrt{\frac{2c_5c_6}{(c_5^2-2)\zeta T}}n^{3/2}\sqrt{k} < \frac{\epsilon}{c_5\zeta + 1},
    \end{equation*}
    which will be satisfied when
    \begin{equation*}
        T = \left\lceil\frac{2c_5c_6n^3k(c_5\zeta+1)^2}{(c_5^2-2)\zeta\epsilon^2} \right\rceil + 1 = \mathcal{O}\left(\frac{n^{6}k^{3/2}}{\epsilon^2\delta}\right).
    \end{equation*}
    The theorem is hereby proved.
\end{proof}

\section{Resource Analysis}\label{sec:resource}
By integrating Grover Adaptive Search for the hard constraint, we introduce an \emph{ADMM-GAS-hard} framework for solving the risk parity problem. As outlined in~Algorithm \ref{alg:admm}, we alternately solve the fixed-cardinality quadratic binary optimization in~\Cref{eq:quadratic_subproblem_quantum} by our GAS for hard constrained problems, and the convex optimizations in~\Cref{eq:admm2,eq:admm3} by classical solvers to find the solutions to the risk parity. Consider the quartic optimization problem in~\Cref{eq:risk_parity_1} subject to a fixed-crdinality constraint $k \leq n$. We aim to generate a high-quality solution characterized by a subgradient bound $\epsilon$ and a consistency tolerance $\epsilon + \delta$ in the reformulated problem~\Cref{eq:quadratic_form}. Using a GAS for hard-constrained BQP-FC with parameter $\xi = 1.34$, the framework incurs a classical overhead of $\mathcal{O}\left(\frac{n^{6}k^{3/2}}{\epsilon^2\delta}\right)$ for solving the subproblems in~\Cref{eq:admm2,eq:admm3}. The total number of quadratic oracle queries scales as:
\begin{equation}
    \label{eq:complexity_ours}
    \mathcal{O}\left(\sqrt{\binom{n}{k}}\frac{n^{6}k^{3/2} }{ \sqrt{M}\epsilon^2 \delta }\right)
\end{equation}
where $M$ denotes the degeneracy of global minimal solutions. When $k \ll n$ is a fixed constant independent of $n$, the complexity~\Cref{eq:complexity_ours} is polynomial in $n$, and we achieve exponential improvements. More generally, consider the regime where $k$ scales linearly with $n$, i.e., $k = \alpha n$. According to Lemma 7 in Chapter 10 of \cite{macwilliams1977theory}, for any $\alpha \in (0, 1)$
\begin{equation*}
    \binom{n}{k} \le \frac{1}{\sqrt{2\pi\alpha(1-\alpha)}}2^{n H_2(\alpha)}, 
\end{equation*}
where $H_2(\alpha) = -\alpha\log_2(\alpha) - (1-\alpha)\log_2(1-\alpha)$. Since $\alpha \ne \frac{1}{2}$ implies $ H_2(\alpha) < 1$, we have  
\begin{equation*}
    \begin{aligned}
        \log\left[\binom{n}{k}\operatorname{poly}(n) \right] & \le H_2(\alpha)n + \log\operatorname{poly}(n).
    \end{aligned}
\end{equation*}
Therefore, for $n$ sufficiently large, we must have $H_2(\alpha)n + \log\operatorname{poly}(n) < n$ when $\alpha \ne \frac{1}{2}$, which implies that $\binom{n}{k}\operatorname{poly}(n) \ll 2^n$. 

In comparison, the QD-GAS approach requires
\begin{equation}
    \mathcal{O}\left(\frac{2^{n/2}}{\sqrt{M}}\right)
\end{equation}
queries for quartic oracles. Thus, provide $\alpha \ne \frac{1}{2}$, for any fixed $k\leq n$, the total oracles required by ADMM-GAS-hard grows slower than QD-GAS for risk parity model. Since ADMM-GAS-hard requires only quadratic oracles, the required total gates for oracles scales as $
    \mathcal{O}\left(\sqrt{\binom{n}{k}}\frac{n^{8}k^{3/2}m }{ \sqrt{M}\epsilon^2 \delta }\right)
$ up to $\widehat{C^2R}(\theta)$ gates. However, the QD-GAS approach oracle requires $ \mathcal{O}(\frac{2^{n/2}n^4 m}{\sqrt{M}})$ up to $\widehat{C^4R}(\theta)$ gates. Regarding implementation, Claudon \textit{et al}.~\cite{Claudon2024} claimed that a $d$-controlled-Pauli-X gate, $\widehat{C^d X}$, can be decomposed with a circuit depth $\Theta\left(\log (d)^3\right)$ and gate count $\mathcal{O}(d\log (d)^4)$ using one ancilla qubit, representing the same size for $\widehat{CZ}_n$.
Given~\Cref{eq:diffusion_Unk} requires gate count $\mathcal{O}(nk)$ and circuit depth $\mathcal{O}(k\log \frac{n}{k})$, the proposed framework achieves an exponential reduction in quantum gates and depth, leading to a drastic reduction in quantum resources requirements. To illustrate the general difference in gate usage, we provide a partial oracle analysis in \Cref{tb:oracletable}.

We note, however, that ADMM is not guaranteed to converge to a global optimum for non-convex problems. Instead, the algorithm typically converges to a stationary point, which may be a local rather than global optimum. The magnitude of the optimality gap depends heavily on the objective landscape. Rigorously quantifying this gap within a hybrid quantum-classical framework poses a non-trivial theoretical challenge, and we leave this investigation for future work.

\section{Conclusion}\label{sec:conclusion}
In summary, we present a Grover search subroutine tailored for Grover Adaptive Search on hard-constrained binary quadratic programming (BQP) problems, proving that marked solutions can be found within
$
    \frac{\pi}{4}\frac{\sqrt{\binom{n}{k}}}{\sqrt{M}}
$
Grover iterations under a fixed-cardinality constraint, where \(k\) denotes the prescribed Hamming weight and \(M\) is the number of marked solutions. This yields an exponential reduction in the number of Grover iterations compared with Grover search for soft-constrained BQP, whose iteration complexity is bounded by
$
    \frac{\pi}{4}\frac{\sqrt{2^n}}{\sqrt{M}}.
$
We further provide an explicit circuit construction of the diffusion operator for hard-constrained problems based on Dicke-state preparation, together with an analysis of the circuit depth and gate complexity.

Additionally, we analyze the construction of oracles for Grover Adaptive Search applied to quadratic binary optimization and higher-degree polynomial binary programmings. Consequently, the proposed approach enables exponentially faster identification of the global optimum of a fixed-cardinality binary quadratic programming using Grover Adaptive Search, at the query-complexity level.

Furthermore, we introduce a hybrid classical--quantum ADMM framework that alternates between a classical SDP solver and our quantum solver to address a risk-parity model, which is widely used in portfolio management. Our method requires
$
    \mathcal{O}\left(\sqrt{\binom{n}{k}}\frac{n^{6}k^{3/2}m }{ \sqrt{M}\epsilon^2 \delta }\right)
$
queries to quadratic oracles in total to obtain an $\epsilon$-approximate solution. Compared with a direct implementation of a quartic oracle---which involves multi-controlled rotations such as \(\widehat{C^4R}(\theta)\)---our method yields an overall exponential reduction in quantum gate complexity and circuit depth. The framework can be further extended to higher-degree polynomial optimization problems under suitable regularity conditions, which we leave for future work.

Overall, our results are robust and provide insights into the practical application of quantum algorithms.

\section{Acknowledgment}
The authors thank Kelvin Koor for the discussion of complexity, and Yuemeng Sun for the discussion of the risk parity model.

\bibliographystyle{unsrt}
\bibliography{main} 

@article{attouch2010proximal,
  title     = {Proximal alternating minimization and projection methods for nonconvex problems: An approach based on the Kurdyka-{\L}ojasiewicz inequality},
  author    = {Attouch, H{\'e}dy and Bolte, J{\'e}r{\^o}me and Redont, Patrick and Soubeyran, Antoine},
  journal   = {Mathematics of operations research},
  volume    = {35},
  number    = {2},
  pages     = {438--457},
  year      = {2010},
  publisher = {INFORMS}
}

@article{bertsimasBestSubsetSelection2016,
  title   = {Best Subset Selection via a Modern Optimization Lens},
  author  = {Bertsimas, Dimitris and King, Angela and Mazumder, Rahul},
  year    = 2016,
  journal = {The Annals of Statistics},
  volume  = {44},
  number  = {2},
  issn    = {0090-5364},
  doi     = {10.1214/15-AOS1388}
}

@article{bolte2007clarke,
  title     = {Clarke subgradients of stratifiable functions},
  author    = {Bolte, J{\'e}r{\^o}me and Daniilidis, Aris and Lewis, Adrian and Shiota, Masahiro},
  journal   = {SIAM Journal on Optimization},
  volume    = {18},
  number    = {2},
  pages     = {556--572},
  year      = {2007},
  publisher = {SIAM}
}

@article{boyd2011distributed,
  title     = {Distributed optimization and statistical learning via the alternating direction method of multipliers},
  author    = {Boyd, Stephen and Parikh, Neal and Chu, Eric and Peleato, Borja and Eckstein, Jonathan and others},
  journal   = {Foundations and Trends{\textregistered} in Machine learning},
  volume    = {3},
  number    = {1},
  pages     = {1--122},
  year      = {2011},
  publisher = {Now Publishers, Inc.}
}

@article{DedieuLearning2021,
  title   = {Learning Sparse Classifiers: {{Continuous}} and Mixed Integer Optimization Perspectives},
  author  = {Dedieu, Antoine and Hazimeh, Hussein and Mazumder, Rahul},
  year    = 2021,
  journal = {Journal of Machine Learning Research},
  volume  = {22},
  number  = {135},
  pages   = {1--47}
}

@article{gambella2020multiblock,
  title     = {Multiblock ADMM heuristics for mixed-binary optimization on classical and quantum computers},
  author    = {Gambella, Claudio and Simonetto, Andrea},
  journal   = {IEEE Transactions on Quantum Engineering},
  volume    = {1},
  pages     = {1--22},
  year      = {2020},
  publisher = {IEEE}
}

@article{ledoitWellconditionedEstimatorLargedimensional2004,
  title     = {A Well-Conditioned Estimator for Large-Dimensional Covariance Matrices},
  author    = {Ledoit, Olivier and Wolf, Michael},
  year      = 2004,
  journal   = {Journal of Multivariate Analysis},
  volume    = {88},
  number    = {2},
  pages     = {365--411},
  issn      = {0047259X},
  doi       = {10.1016/S0047-259X(03)00096-4},
  copyright = {https://www.elsevier.com/tdm/userlicense/1.0/}
}

@article{liCalculusExponentKurdykaLojasiewicz2018,
  title   = {Calculus of the {{Exponent}} of {{Kurdyka}}--{{\L ojasiewicz Inequality}} and {{Its Applications}} to {{Linear Convergence}} of {{First-Order Methods}}},
  author  = {Li, Guoyin and Pong, Ting Kei},
  year    = 2018,
  journal = {Foundations of Computational Mathematics},
  volume  = {18},
  number  = {5},
  pages   = {1199--1232},
  issn    = {1615-3375, 1615-3383},
  doi     = {10.1007/s10208-017-9366-8}
}

@book{nesterovLecturesConvexOptimization2018,
  title     = {Lectures on {{Convex Optimization}}},
  author    = {Nesterov, Yurii},
  year      = 2018,
  series    = {Springer {{Optimization}} and {{Its Applications}}},
  volume    = {137},
  publisher = {Springer International Publishing},
  address   = {Cham},
  doi       = {10.1007/978-3-319-91578-4},
  copyright = {http://www.springer.com/tdm},
  isbn      = {978-3-319-91577-7 978-3-319-91578-4}
}

@article{Qian2005,
  author  = {Qian, Edward},
  journal = {Panagora Asset Management},
  title   = {On the financial interpretation of risk contribution: {Risk} budgets do add up},
  year    = {2005},
  doi     = {http://dx.doi.org/10.2139/ssrn.684221}
}

@book{rockafellarVariationalAnalysis1998,
  title     = {Variational {{Analysis}}},
  author    = {Rockafellar, R. Tyrrell and Wets, Roger J. B.},
  year      = 1998,
  series    = {Grundlehren Der Mathematischen {{Wissenschaften}}},
  volume    = {317},
  publisher = {Springer Berlin Heidelberg},
  address   = {Berlin, Heidelberg},
  doi       = {10.1007/978-3-642-02431-3},
  copyright = {http://www.springer.com/tdm},
  isbn      = {978-3-540-62772-2 978-3-642-02431-3}
}

@article{wang2019global,
  title     = {Global convergence of ADMM in nonconvex nonsmooth optimization},
  author    = {Wang, Yu and Yin, Wotao and Zeng, Jinshan},
  journal   = {Journal of Scientific Computing},
  volume    = {78},
  pages     = {29--63},
  year      = {2019},
  publisher = {Springer}
}

@article{Gilliam2021groveradaptive,
  doi       = {10.22331/q-2021-04-08-428},
  url       = {https://doi.org/10.22331/q-2021-04-08-428},
  title     = {Grover {A}daptive {S}earch for {C}onstrained {P}olynomial {B}inary {O}ptimization},
  author    = {Gilliam, Austin and Woerner, Stefan and Gonciulea, Constantin},
  journal   = {{Quantum}},
  issn      = {2521-327X},
  publisher = {{Verein zur F{\"{o}}rderung des Open Access Publizierens in den Quantenwissenschaften}},
  volume    = {5},
  pages     = {428},
  month     = apr,
  year      = {2021}
}

@inproceedings{10821069,
  author    = {Ollive, Robin and Louise, Stephane},
  booktitle = {2024 IEEE International Conference on Quantum Computing and Engineering (QCE)},
  title     = {Quantum Signal Processing Based Grover's Adaptative Search Oracle for High Order Unconstrained Binary Optimization Problems},
  year      = {2024},
  volume    = {02},
  number    = {},
  pages     = {256-261},
  doi       = {10.1109/QCE60285.2024.10288}
}

@inproceedings{9951196,
  author    = {Bärtschi, Andreas and Eidenbenz, Stephan},
  booktitle = {2022 IEEE International Conference on Quantum Computing and Engineering (QCE)},
  title     = {Short-Depth Circuits for Dicke State Preparation},
  year      = {2022},
  volume    = {},
  number    = {},
  pages     = {87-96},
  doi       = {10.1109/QCE53715.2022.00027}
}

@techreport{feige1997densest,
  author      = {Feige, U. and Seltser, M.},
  title       = {On the densest k-subgraph problems},
  year        = {1997},
  institution = {Weizmann Science Press of Israel},
  address     = {ISR}
}

@article{CORNEIL198427,
  title    = {Clustering and domination in perfect graphs},
  journal  = {Discrete Applied Mathematics},
  volume   = {9},
  number   = {1},
  pages    = {27-39},
  year     = {1984},
  issn     = {0166-218X},
  doi      = {https://doi.org/10.1016/0166-218X(84)90088-X},
  url      = {https://www.sciencedirect.com/science/article/pii/0166218X8490088X},
  author   = {D.G. Corneil and Y. Perl}
}

@article{markowitzPortfolioSelection1952,
  author  = {Markowitz, Harry},
  journal = {The Journal of Finance},
  title   = {Portfolio Selection},
  year    = {1952},
  month   = mar,
  pages   = {77--91},
  volume  = {7},
  doi     = {10.2307/2975974},
  langid  = {english},
  url     = {https://www.jstor.org/stable/2975974}
}

@article{10.1145/792550.792553,
  author     = {Henzinger, Monika R. and Motwani, Rajeev and Silverstein, Craig},
  title      = {Challenges in web search engines},
  year       = {2002},
  issue_date = {Fall 2002},
  publisher  = {Association for Computing Machinery},
  address    = {New York, NY, USA},
  volume     = {36},
  number     = {2},
  issn       = {0163-5840},
  url        = {https://doi.org/10.1145/792550.792553},
  doi        = {10.1145/792550.792553},
  journal    = {SIGIR Forum},
  month      = sep,
  pages      = {11–22},
  numpages   = {12}
}

@inproceedings{10.5555/1083592.1083676,
  author    = {Gibson, David and Kumar, Ravi and Tomkins, Andrew},
  booktitle = {Proceedings of the 31st International Conference on Very Large Data Bases},
  title     = {Discovering large dense subgraphs in massive graphs},
  year      = {2005},
  pages     = {721–732},
  publisher = {VLDB Endowment},
  series    = {VLDB '05},
  isbn      = {1595931546},
  location  = {Trondheim, Norway},
  numpages  = {12}
}

@inproceedings{Grover1996,
  author    = {Grover, Lov K.},
  title     = {A Fast Quantum Mechanical Algorithm for Database Search},
  booktitle = {Proceedings of the Twenty-eighth Annual ACM Symposium on Theory of Computing},
  series    = {STOC '96},
  year      = {1996},
  isbn      = {0-89791-785-5},
  location  = {Philadelphia, Pennsylvania, USA},
  pages     = {212--219},
  numpages  = {8},
  doi       = {10.1145/237814.237866},
  acmid     = {237866},
  publisher = {ACM},
  address   = {New York, NY, USA}
}

@inproceedings{Brassard1998,
  author    = {Brassard, Gilles and H{\O}yer, Peter and Tapp, Alain},
  booktitle = {Automata, Languages and Programming},
  title     = {Quantum counting},
  year      = {1998},
  address   = {Berlin, Heidelberg},
  editor    = {Larsen, Kim G. and Skyum, Sven and Winskel, Glynn},
  pages     = {820--831},
  publisher = {Springer Berlin Heidelberg},
  isbn      = {978-3-540-68681-1}
}

@article{Bulger2003,
  author    = {Bulger, D. and Baritompa, W. P. and Wood, G. R.},
  journal   = {Journal of Optimization Theory and Applications},
  title     = {Implementing Pure Adaptive Search with Grover’s Quantum Algorithm},
  year      = {2003},
  issn      = {1573-2878},
  month     = mar,
  number    = {3},
  pages     = {517--529},
  volume    = {116},
  doi       = {10.1023/a:1023061218864},
  publisher = {Springer Science and Business Media LLC}
}

@article{Baritompa2005,
  author   = {Baritompa, W. P. and Bulger, D. W. and Wood, G. R.},
  journal  = {SIAM Journal on Optimization},
  title    = {Grover's Quantum Algorithm Applied to Global Optimization},
  year     = {2005},
  number   = {4},
  pages    = {1170-1184},
  volume   = {15},
  doi      = {10.1137/040605072},
  eprint   = {https://doi.org/10.1137/040605072},
  url      = {https://doi.org/10.1137/040605072}
}

@misc{Gilliam2021,
  author        = {Austin Gilliam and Charlene Venci and Sreraman Muralidharan and Vitaliy Dorum and Eric May and Rajesh Narasimhan and Constantin Gonciulea},
  title         = {Foundational Patterns for Efficient Quantum Computing},
  year          = {2021},
  archiveprefix = {arXiv},
  eprint        = {1907.11513},
  primaryclass  = {quant-ph},
  url           = {https://arxiv.org/abs/1907.11513}
}

@misc{Durr1999,
  author        = {Christoph Durr and Peter Hoyer},
  title         = {A Quantum Algorithm for Finding the Minimum},
  year          = {1999},
  archiveprefix = {arXiv},
  eprint        = {quant-ph/9607014},
  primaryclass  = {quant-ph},
  url           = {https://arxiv.org/abs/quant-ph/9607014}
}

@inproceedings{Baertschi2019,
  author    = {B{\"a}rtschi, Andreas and Eidenbenz, Stephan},
  booktitle = {Fundamentals of Computation Theory},
  title     = {Deterministic Preparation of Dicke States},
  year      = {2019},
  address   = {Cham},
  editor    = {G{\k{a}}sieniec, Leszek Antoni and Jansson, Jesper and Levcopoulos, Christos},
  pages     = {126--139},
  publisher = {Springer International Publishing},
  isbn      = {978-3-030-25027-0}
}

@article{Nam2020,
  author    = {Nam, Yunseong and Su, Yuan and Maslov, Dmitri},
  journal   = {npj Quantum Information},
  title     = {Approximate quantum Fourier transform with O(n log(n)) T gates},
  year      = {2020},
  issn      = {2056-6387},
  month     = mar,
  number    = {1},
  volume    = {6},
  doi       = {10.1038/s41534-020-0257-5},
  publisher = {Springer Science and Business Media LLC}
}

@article{Feige2001,
  author    = {Feige, U. and Peleg, D. and Kortsarz, G.},
  journal   = {Algorithmica},
  title     = {The Dense k -Subgraph Problem},
  year      = {2001},
  issn      = {1432-0541},
  month     = mar,
  number    = {3},
  pages     = {410--421},
  volume    = {29},
  doi       = {10.1007/s004530010050},
  publisher = {Springer Science and Business Media LLC}
}

@article{Maillard2010,
  title   = {On the Properties of Equally-Weighted Risk Contributions Portfolios},
  author  = {Maillard, St{\'e}phane and Roncalli, Thierry and Te{\i}letche, J{\'e}r{\^o}me},
  journal = {Journal of Portfolio Management},
  volume  = {36},
  number  = {4},
  pages   = {60--70},
  year    = {2010},
  doi     = {10.3905/JPM.2010.36.4.060}
}

@article{Asness2012,
  title   = {Leverage Aversion and Risk Parity},
  author  = {Asness, Clifford S. and Frazzini, Andrea and Pedersen, Lasse Heje},
  journal = {Financial Analysts Journal},
  volume  = {68},
  number  = {1},
  pages   = {47--59},
  year    = {2012},
  doi     = {10.2469/faj.v68.n1.2}
}

@book{Brandes2005,
  author    = {Brandes, Ulrik},
  publisher = {Springer Science \& Business Media},
  title     = {Network analysis: methodological foundations},
  year      = {2005},
  volume    = {3418}
}

@inproceedings{Kortsarz1993,
  author    = {Kortsarz, G. and Peleg, D.},
  booktitle = {Proceedings of 1993 IEEE 34th Annual Foundations of Computer Science},
  title     = {On choosing a dense subgraph},
  year      = {1993},
  pages     = {692-701},
  doi       = {10.1109/SFCS.1993.366818}
}

@article{Bruglieri2006,
  author   = {Maurizio Bruglieri and Matthias Ehrgott and Horst W. Hamacher and Francesco Maffioli},
  journal  = {Discrete Applied Mathematics},
  title    = {An annotated bibliography of combinatorial optimization problems with fixed cardinality constraints},
  year     = {2006},
  issn     = {0166-218X},
  note     = {2nd Cologne/Twente Workshop on Graphs and Combinatorial Optimization (CTW 2003)},
  number   = {9},
  pages    = {1344-1357},
  volume   = {154},
  doi      = {https://doi.org/10.1016/j.dam.2005.05.036},
  url      = {https://www.sciencedirect.com/science/article/pii/S0166218X0500380X}
}

@article{Sotirov2020,
  author    = {Renata Sotirov},
  journal   = {Optimization Methods and Software},
  title     = {On solving the densest k-subgraph problem on large graphs},
  year      = {2020},
  number    = {6},
  pages     = {1160--1178},
  volume    = {35},
  doi       = {10.1080/10556788.2019.1595620},
  eprint    = {https://doi.org/10.1080/10556788.2019.1595620},
  publisher = {Taylor \& Francis},
  url       = {https://doi.org/10.1080/10556788.2019.1595620}
}

@article{Fratkin2006,
  author    = {Fratkin, Eugene and Naughton, Brian T. and Brutlag, Douglas L. and Batzoglou, Serafim},
  journal   = {Bioinformatics},
  title     = {MotifCut: regulatory motifs finding with maximum density subgraphs},
  year      = {2006},
  issn      = {1367-4803},
  month     = jul,
  number    = {14},
  pages     = {e150--e157},
  volume    = {22},
  doi       = {10.1093/bioinformatics/btl243},
  publisher = {Oxford University Press (OUP)}
}

@article{Ricca2024,
  author   = {Ricca, Federica and Scozzari, Andrea},
  journal  = {European Journal of Operational Research},
  title    = {Portfolio optimization through a network approach: Network assortative mixing and portfolio diversification},
  year     = {2024},
  month    = {None},
  number   = {2},
  pages    = {700-717},
  volume   = {312},
  doi      = {10.1016/j.ejor.2023.07.010},
  url      = {https://ideas.repec.org/a/eee/ejores/v312y2024i2p700-717.html}
}

@inproceedings{Saha2010,
  author    = {Saha, Barna and Hoch, Allison and Khuller, Samir and Raschid, Louiqa and Zhang, Xiao-Ning},
  booktitle = {Research in Computational Molecular Biology},
  title     = {Dense Subgraphs with Restrictions and Applications to Gene Annotation Graphs},
  year      = {2010},
  address   = {Berlin, Heidelberg},
  editor    = {Berger, Bonnie},
  pages     = {456--472},
  publisher = {Springer Berlin Heidelberg},
  isbn      = {978-3-642-12683-3}
}

@misc{yuan2024quantifyingadvantagesapplyingquantum,
  title         = {Quantifying the advantages of applying quantum approximate algorithms to portfolio optimisation},
  author        = {Haomu Yuan and Christopher K. Long and Hugo V. Lepage and Crispin H. W. Barnes},
  year          = {2024},
  eprint        = {2410.16265},
  archiveprefix = {arXiv},
  primaryclass  = {quant-ph},
  url           = {https://arxiv.org/abs/2410.16265}
}

@misc{Cuccaro2004,
  author        = {Steven A. Cuccaro and Thomas G. Draper and Samuel A. Kutin and David Petrie Moulton},
  title         = {A new quantum ripple-carry addition circuit},
  year          = {2004},
  archiveprefix = {arXiv},
  eprint        = {quant-ph/0410184},
  primaryclass  = {quant-ph},
  url           = {https://arxiv.org/abs/quant-ph/0410184}
}

@article{Gidney2018,
  author    = {Gidney, Craig},
  journal   = {{Quantum}},
  title     = {Halving the cost of quantum addition},
  year      = {2018},
  issn      = {2521-327X},
  month     = jun,
  pages     = {74},
  volume    = {2},
  doi       = {10.22331/q-2018-06-18-74},
  publisher = {{Verein zur F{\"{o}}rderung des Open Access Publizierens in den Quantenwissenschaften}},
  url       = {https://doi.org/10.22331/q-2018-06-18-74}
}

@misc{Farhi2014,
  author        = {Farhi, Edward and Goldstone, Jeffrey and Gutmann, Sam},
  month         = nov,
  title         = {A Quantum Approximate Optimization Algorithm},
  year          = {2014},
  archiveprefix = {arXiv},
  eprint        = {1411.4028},
  primaryclass  = {quant-ph},
  url           = {https://arxiv.org/abs/1411.4028}
}

@article{Cerezo2021,
  title     = {Variational quantum algorithms},
  volume    = {3},
  issn      = {2522-5820},
  url       = {http://dx.doi.org/10.1038/s42254-021-00348-9},
  doi       = {10.1038/s42254-021-00348-9},
  number    = {9},
  journal   = {Nature Reviews Physics},
  publisher = {Springer Science and Business Media LLC},
  author    = {Cerezo, M. and Arrasmith, Andrew and Babbush, Ryan and Benjamin, Simon C. and Endo, Suguru and Fujii, Keisuke and McClean, Jarrod R. and Mitarai, Kosuke and Yuan, Xiao and Cincio, Lukasz and Coles, Patrick J.},
  year      = {2021},
  month     = aug,
  pages     = {625–644}
}

@article{Kadowaki1998,
  author    = {Kadowaki, Tadashi and Nishimori, Hidetoshi},
  journal   = {Phys. Rev. E},
  title     = {Quantum annealing in the transverse Ising model},
  year      = {1998},
  month     = {Nov},
  pages     = {5355--5363},
  volume    = {58},
  doi       = {10.1103/PhysRevE.58.5355},
  issue     = {5},
  numpages  = {0},
  publisher = {American Physical Society},
  url       = {https://link.aps.org/doi/10.1103/PhysRevE.58.5355}
}

@misc{Yuan2025,
  author        = {Haomu Yuan and Daniel Stilck França and Ilia Luchnikov and Egor Tiunov and Tobias Haug and Leandro Aolita},
  title         = {Exponential Speed-ups for Structured Goemans-Williamson relaxations via Quantum Gibbs States and Pauli Sparsity},
  year          = {2025},
  archiveprefix = {arXiv},
  eprint        = {2510.08292},
  primaryclass  = {quant-ph},
  url           = {https://arxiv.org/abs/2510.08292}
}

@article{Hadfield2019,
  title     = {From the Quantum Approximate Optimization Algorithm to a Quantum Alternating Operator Ansatz},
  volume    = {12},
  issn      = {1999-4893},
  url       = {http://dx.doi.org/10.3390/a12020034},
  doi       = {10.3390/a12020034},
  number    = {2},
  journal   = {Algorithms},
  publisher = {MDPI AG},
  author    = {Hadfield, Stuart and Wang, Zhihui and O’Gorman, Bryan and Rieffel, Eleanor G. and Venturelli, Davide and Biswas, Rupak},
  year      = {2019},
  month     = feb,
  pages     = {34}
}

@article{Claudon2024,
  author    = {Claudon, Baptiste and Zylberman, Julien and Feniou, César and Debbasch, Fabrice and Peruzzo, Alberto and Piquemal, Jean-Philip},
  journal   = {Nature Communications},
  title     = {Polylogarithmic-depth controlled-NOT gates without ancilla qubits},
  year      = {2024},
  issn      = {2041-1723},
  month     = jul,
  number    = {1},
  volume    = {15},
  doi       = {10.1038/s41467-024-50065-x},
  publisher = {Springer Science and Business Media LLC}
}

@book{macwilliams1977theory,
  title     = {The Theory of Error-correcting Codes},
  author    = {MacWilliams, F.J. and Sloane, N.J.A.},
  isbn      = {9780444850102},
  lccn      = {76041296},
  series    = {Mathematical Library},
  url       = {https://books.google.com.sg/books?id=nv6WCJgcjxcC},
  year      = {1977},
  publisher = {North-Holland Publishing Company}
}

@phdthesis{yuan2026quantum,
  author = {Haomu Yuan},
  title  = {Quantum Optimisation Algorithms and their Applications},
  school = {University of Cambridge},
  year   = {2026},
  type   = {PhD thesis},
  note   = {Submitted 22 January 2026}
}

@misc{bae2026reducingcircuitresourcesgrovers,
  title         = {Reducing Circuit Resources in Grover's Algorithm via Constraint-Aware Initialization},
  author        = {Eunok Bae and Jeonghyeon Shin and Minjin Choi},
  year          = {2026},
  eprint        = {2601.17725},
  archiveprefix = {arXiv},
  primaryclass  = {quant-ph},
  url           = {https://arxiv.org/abs/2601.17725}
}

@article{ledoitHoney2023,
  author    = {Ledoit, Olivier and Wolf, Michael},
  journal   = {The Journal of Portfolio Management},
  title     = {Honey, I Shrunk the Sample Covariance Matrix},
  year      = {2004},
  issn      = {2168-8656},
  month     = jul,
  number    = {4},
  pages     = {110--119},
  volume    = {30},
  doi       = {10.3905/jpm.2004.110},
  publisher = {With Intelligence LLC}
}

\appendix
\section{Implementation of the Split \& Cyclic Shift unitary}\label{sec:scs}

Give $n\geq 2, k\geq 1$, the $\widehat{SCS}_{n,k}$ unitary can be implemented using the approach proposed in \cite{Baertschi2019}, which is based on the implementation of the $\widehat{SCS}_2^{(n)}$ and $\widehat{SCS}_{3}^{(n,t)}$ operators on in a sequential order as:
\begin{equation}
    \widehat{SCS}_{n,k} =
    \begin{cases}
        \widehat{SCS}^{(n)}_2,                                              & k = 1,     \\[4pt]
        \widehat{SCS}^{(n)}_2 \displaystyle\prod_{t=2}^{k} \widehat{SCS}_{3}^{(n,t)}, & k \geq 2 .
    \end{cases}
\end{equation}
where $\widehat{SCS}_2^{(n)}$ and $\widehat{SCS}_{3}^{(n,t)}$ is provided in Fig.~\ref{fig:SCSunitary}.

\begin{figure}[H]
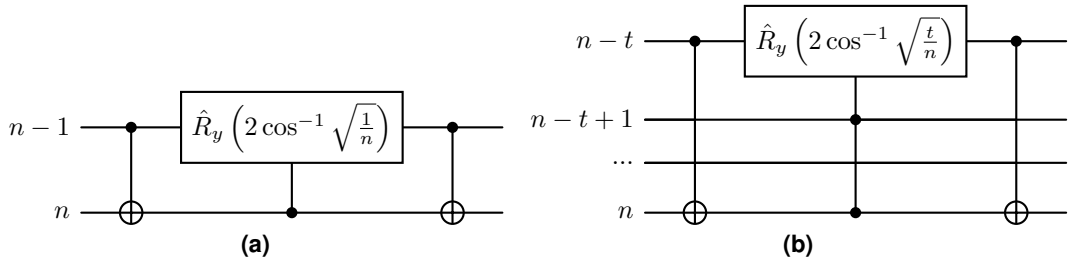

    \centering
    \begin{tabular}{c c}
        \usebox\boxSCSA                          & \usebox\boxSCSB                          \\
        \textbf{\fontfamily{phv}\selectfont (a)} & \textbf{\fontfamily{phv}\selectfont (b)}
    \end{tabular}
    \caption[The parameterised circuit of $\widehat{SCS}_2^{(n)}$ and $\widehat{SCS}_{3}^{(n,t)}$ for $\widehat{SCS}_{n,l}$ unitary.]{The circuit of $\widehat{SCS}_2^{(n)}$ and $\widehat{SCS}_{3}^{(n,t)}$ for $\widehat{SCS}_{n,l}$ unitary: (a) $\widehat{SCS}_2^{(n)}$, (b) $\widehat{SCS}_{3}^{(n,t)}$. }
    \label{fig:SCSunitary}
\end{figure}

\section{Quantum Dictionary oracle}\label{sec:quantum_dictionary_oracle}
In this section, we briefly review quantum dictionary oracle implementations for Grover Adaptive Search (GAS) applied to BPP-FC~\cite{Gilliam2021groveradaptive}.
The summary of required gates and qubits is shown in~\Cref{tb:oracletable}.

The first step is to encode the function value $f(\mathbf x)-y$ into ancilla qubits, i.e.,
\begin{equation}\label{eq:arithmetic_oracle}
    O: \ket{\mathbf x}_n\ket{0}_m \rightarrow \ket{\mathbf x}_n \ket{f(\mathbf x)-y}_m.
\end{equation}
where $\ket{\mathbf x}:=|x_1 x_2,\dots, x_i,\dots, x_n\rangle$, $x_i\in \{0, 1\}$ is one of the computational basis, and $f(\mathbf x)-y$ denotes two's-complement representation i.e. $f(\mathbf x)-y\in\mathbb{Z}, -2^{m-1}\leq f(\mathbf x) < 2^{m-1}$. Note that the coefficient of the quadratic function may not be an integer, thus we need to approximate the objective function to $m$-bit binary fractions, which will introduce an approximation error that scales as $\epsilon = \mathcal{O}(2^{-m})$.
When applying the quantum arithmetic oracle to the quadratic binary objective function in~\Cref{eq:ksubgrap}, we need to implement the $\mathcal{O}(n^2)$ additions and multiplications for building up $f(\mathbf x)-y$ on the ancilla. A direct implement the addition of $m$-bit integers on quantum arithmetic needs 2$m$+2 qubits, and $2m+1$ Toffoli gates~\cite{Cuccaro2004,Gidney2018}.
Note that, at the beginning of the multiplications and additions, we can iteratively optimize the qubits to $m'<m$ to reduce the implementation complexity; thus, the total gate required could be smaller than the one described in~\cite{Gilliam2021groveradaptive}. However, to improve this is not the focus of this article; we loosely bound the cost of implementing the quantum arithmetic oracles as $\mathcal{O}(n^2 m)$ Toffoli gates and $m$ ancilla qubits.

However, inspired by quantum data structure, quantum dictionary oracle allows to directly map the function value to ancilla qubits without performing the arithmetic operations. In summary, the quantum dictionary oracle is to construct the mapping by a controlled unitary operator as
$\widehat{CU}_\mathbf x\left(2\pi/{2^m}\right)$, where it controls on the basis $\ket{x}_n$ , then apply $\hat U_{\mathbf x}\left(2\pi/{2^m}\right)$ on the target the ancilla qubit, given by
\begin{equation}
    \hat U_{\mathbf x}\left(2\pi/{2^m}\right)\ket{K}_m = e^{i 2\pi K [f(\mathbf  x)-y]/{2^m}}\ket{K}_m.
\end{equation}
where $K$ is the index of the basis $\ket{K}_m$. Then, the action of the controlled unitary operator, including Hadamard transformation, on the initial state is given by
\begin{equation}\label{eq:qd_cux}
    \begin{aligned}
          & \widehat{CU}_\mathbf x\left(2\pi/{2^m}\right)\ket{\mathbf x}_n \hat H^{\otimes m}\ket{0}_m                          \\
        = & \ket{\mathbf x}_n \hat U_{\mathbf x}\left(2\pi /{2^m}\right)\hat H^{\otimes m}\ket{0}_m                          \\
        = & \ket{\mathbf x}_n \frac{1}{\sqrt{N'}}\sum_{K =1}^{N'-1}\hat U_{\mathbf x}\left(2\pi /{2^m}\right)\ket{K}_m  \\
        = & |\mathbf x \rangle_n  \frac{1}{\sqrt{N'}}\sum_{K =1}^{N'-1}e^{i 2\pi K [f(\mathbf x)-y]/{2^m}}\ket{K}_m
    \end{aligned}
\end{equation}
where $N' =2^m$. Then, an inverse quantum Fourier transformation, $QFT^\dagger$, is applied to the ancilla qubits to encode the binary representation of $f(\mathbf x)-y$ to the $m$ qubits as in~\Cref{eq:arithmetic_oracle}. To construct $\hat U_{\mathbf x}\left(2\pi/{2^m}\right)$ requires to apply $m$ phase gates sequentially with incremental angles, $\hat R(2^j \theta)=\left(\begin{array}{cc}
1 & 0 \\
0 & e^{i 2^j \theta}
\end{array}\right), j = 0,\dots m-1$, on each $m$ ancilla qubit. The controlled phase rotation controls on qubits corresponding to nontrivial elements in the objective function and apply $\hat R(2^j \theta)$ on the target ancilla qubits.

\AddToHook{enddocument/afteraux}{\immediate\write18{cp output.aux main.aux}}

\end{document}